\documentclass[12pt]{article}

\usepackage[T1]{fontenc}
\usepackage[utf8]{inputenc}
\usepackage{lmodern}

\usepackage[left=.8in,right=.8in,top=.8in,bottom=.8in,footskip=.3in]{geometry}

\usepackage{amsmath,amssymb,amsfonts,amsthm}
\usepackage{bm}            
\allowdisplaybreaks

\usepackage{setspace}
\usepackage{graphicx}
\usepackage{subcaption}
\graphicspath{{./Examples/}}

\usepackage{algorithm}
\usepackage{algpseudocode}

\usepackage{placeins}
\usepackage{arydshln}

\usepackage{xcolor}
\usepackage[
  colorlinks=true,
  linkcolor=blue,
  citecolor=blue,
  urlcolor=blue
]{hyperref}

\usepackage{tikz}
\usetikzlibrary{arrows,automata,calc,graphs,graphs.standard,patterns}
\usepackage{tikz-qtree}

\usepackage{pifont}

\newtheorem{theorem}{Theorem}[section]

\newtheorem{definition}{Definition}[section]
\newtheorem{example}{Example}[section]

\newtheorem{remark}{Remark}[section]

\DeclareMathOperator{\bzero}{\pmb{0}}
\DeclareMathOperator{\bt}{\pmb{\theta}}
\DeclareMathOperator{\mC}{\mathcal{C}}
\DeclareMathOperator{\mL}{\mathcal{L}}

\DeclareMathOperator{\bSig}{\pmb{\Sigma}}
\DeclareMathOperator{\bI}{\pmb{I}}

\title{Safe hypotheses testing with application to \\ order restricted inference}

\author{Ori Davidov}

\date{Department of Statistics, University of Haifa, Mount Carmel, Haifa 3498838 Israel\\
\bigskip
E-mail: \texttt{davidov@stat.haifa.ac.il}
}

\begin{document}

\maketitle

\begin{abstract}
Hypothesis tests under order restrictions arise in a wide range of scientific applications. By exploiting inequality constraints, such tests can achieve substantial gains in power and interpretability.
However, these gains come at a cost: when the imposed constraints are misspecified, the resulting inferences may be misleading or even invalid, and Type III errors may occur, i.e., the null hypothesis may be rejected when neither the null nor the alternative is true. To address this problem, this paper introduces safe tests. Heuristically, a safe test is a testing procedure that is asymptotically free of Type III errors. The proposed test is accompanied by a certificate of validity, a pre--test that assesses whether the original hypotheses are consistent with the data, thereby ensuring that the null hypothesis is rejected only when warranted, enabling principled inference without risk of systematic error. Although the development in this paper focus on testing problems in order--restricted inference, the underlying ideas are more broadly applicable. The proposed methodology is evaluated through simulation studies and the analysis of well--known illustrative data examples, demonstrating strong protection against Type III errors while maintaining power comparable to standard procedures. 

\medskip
   
{\underline{\textit{Key-Words}}:} Certificate of Validity, Constrained Inference, Distance Test, Large Sample Theory, Safe Tests, Type III Errors.

\end{abstract}

\section{Introduction} \label{section:introduction}

Hypothesis testing has been studied extensively within the framework of order--restricted inference (ORI); see the monographs of Barlow et al. (1972), Robertson et al. (1988), and Silvapulle and Sen (2005). Silvapulle and Sen (2005) classified a large subset of the testing problems arising in ORI into Type A or Type B Problems. Type A Problems are formulated as
\begin{equation} \label{Eq.Hypotheses.TypeA}
H_{0}:\boldsymbol{\theta }\in \mathcal{L} \text{ versus } H_{1}: \boldsymbol{\theta }\in \mathcal{C}\backslash \mathcal{L}
\end{equation}
where $\mathcal{L}$ is a linear subspace and $\mathcal{C}$ is closed convex cone with $\mathcal{L}\subset \mathcal{C}$. A classic example of such testing problems is $H_{0}:\boldsymbol{\theta }=\boldsymbol{0}$ versus $H_{1}:\boldsymbol{\theta}\in \mathbb{R}_{+}^{m}\backslash \{\boldsymbol{0}\}$ where $\mathbb{R}_{+}^{m}$ is the positive orthant. Type A problems are common in applications and often referred to as testing for an order. Type B Problems are formulated as 
\begin{equation} \label{Eq.Hypotheses.TypeB}
H_{0}:\bt\in \mathcal{C} \text{ versus }H_{1}:\bt\notin \mathcal{C}.  
\end{equation}
A canonical Type B Problem is $H_{0}:\boldsymbol{\theta }\in \mathbb{R}_{+}^{m}$ versus $H_{1}:\boldsymbol{\theta }\notin \mathbb{R}_{+}^{m}$.  
This class of tests is referred to as testing against an order. 

It is well known that accounting for constraints, as specified in \eqref{Eq.Hypotheses.TypeA} or \eqref{Eq.Hypotheses.TypeB}, improves the power of the resulting tests (e.g.,Praestgaard 2012) as well as the accuracy of the associated estimators (e.g., Hwang and Peddada 1994, Silvapulle and Sen 2005, Rosen and Davidov 2017). These improvements are often substantial (cf., Singh et al. 2021). A case in point is ANOVA type problems where the superior performance of ORI has been well known for over fifty years, cf., Barlow et al. (1972). Singh and Davidov (2019) recently showed that striking gains are possible when experiments are both designed and analyzed using methods that properly account for the underlying constraints. However, to date, few scientific studies have capitalized on these findings, and the methods of ORI remain vastly underutilized. In our view, barriers to the broad adoption of the methods of ORI are both practical and principled. Practically, ORI requires constrained estimation and nonstandard asymptotic theory, making it more complex to understand and implement. Moreover, standard tools such as the bootstrap may fail when parameters lie on the boundary of the parameter space (Andrews, 2000), and user–friendly software remains limited. Principled objections concern the behavior of tests and estimators when the assumed constraints, e.g., $\boldsymbol{\theta} \in \mathcal{C}$ are misspecified. For example, one may ask how a test for \eqref{Eq.Hypotheses.TypeA} behaves when $\boldsymbol{\theta} \notin \mathcal{C}$. Additionally, several authors, including Silvapulle (1997) and Cohen and Sackrowitz (2004), have discussed methodological concerns and potential deficiencies of the likelihood ratio test (LRT) in ORI. See also Perlman and Wu (1999) and the references therein. Such concerns have motivated the development of alternative procedures, including cone--order monotone tests as advocated by Cohen and Sackrowitz (1998). This communication addresses the aforementioned principled concerns, thereby resolving many of the issues raised in the literature.

Despite the well--known possibility of misspecifying ordered restrictions and the widely recognized risk of Type III errors, there is a clear gap in the literature concerning their formal treatment. Addressing this gap, the paper introduces and studies a novel, easy--to--apply safe test, a testing procedure that is asymptotically free of Type III errors. Safe tests constitute a first step toward adaptive ORI, methodologies for estimation, prediction, and related tasks, in which order constraints are imposed only when supported by the data.

The paper is organized in the following way. In Section \ref{Sec.Geom} the geometry of the distance test is studied. Section \ref{Sec.SafeTest} introduces and studies a novel safe test. Simulation results and illustrative examples, including the reanalysis of some well known case studies from the literature, are provided in Section \ref{Sec.Numerics}. We conclude in Section \ref{Sec.Summary} with a brief summary and a discussion. All proofs are collected in Appendix A.

\section{The distance test and Type III errors} \label{Sec.Geom}

Suppose that there exists a statistic $\pmb{S}_{n}$ which estimates a parameter $\bt \in \Theta \subseteq \mathbb{R}^{m}$ and satisfies 
\begin{equation} \label{Eq.Sn}
\sqrt{n}(\pmb{S}_{n}-\bt)\Rightarrow \mathcal{N}
_{m}(\bzero,\bSig)  
\end{equation}
as $n\rightarrow \infty $ where $\Rightarrow $ denotes convergence in distribution. We further assume that $\bSig_{n},$ a consistent estimator for $\bSig$, exists. Numerous tests for \eqref{Eq.Hypotheses.TypeA} and \eqref{Eq.Hypotheses.TypeB} assuming \eqref{Eq.Sn}  have been proposed in the literature (Silvapulle and Sen, 2005). The most common in both applications as well in theoretical studies is the distance test (DT) which is of the form
\begin{equation} \label{Eq.DT}
T_{n}=T_{n}(\Theta_0,\Theta_1)=n\{\|\pmb{S}_{n}-\Pi_{{\bSig}_{n}}(\pmb{S}_{n}|\Theta_0)\|_{\bSig_{n}}^{2}- \| \pmb{S}_{n}-\Pi_{\bSig_{n}}(\pmb{S}_{n}|\Theta_1)\|_{\bSig_{n}}^{2}\},  
\end{equation}
where $(\Theta_0,\Theta_1)=(\mL,\mC)$ for Type A Problems and $(\Theta_0,\Theta_1)=(\mC,\mathbb{R}^m)$ for Type B Problems. Here $\Pi_{\bSig_{n}}(\pmb{S}_{n}|\Theta_{i})$ is the $\bSig_{n}-$projection of $\pmb{S}_{n}$ onto $\Theta_{i}$ where $i\in \{0,1\}$ and $\|\boldsymbol{\cdot}\|_{\bSig_{n}}$ is the corresponding norm. The DT is the large sample version of the LRT under normality, i.e., if \eqref{Eq.Sn} holds exactly and $\bSig$ is known up to a constant multiple, then \eqref{Eq.DT} is the LRT. The null is rejected in favor of the alternative at the level $\alpha $ if $T_{n}\geq c_{\alpha }$ the $\alpha$ level critical value. We say that the DT is consistent at $\bt\in\mathbb{R}^m$ if $\mathbb{P}_{\bt}\left( T_{n}\geq c_{\alpha }\right)\to 1$ as $n\to\infty$.

Understanding the geometry of the DT requires additional notation. First, for any cone $\mC$ let $\mC_{\bSig}^{\circ}$ denote its polar cone with respect to the inner product $\langle 
\boldsymbol{u},\boldsymbol{v}\rangle_{\bSig}=
\boldsymbol{u}^{T}\bSig^{-1}\boldsymbol{v}$, i.e., $\mathcal{C}_{\bSig}^{\circ}=\{\boldsymbol{u}\in
\mathbb{R}^{m}:\boldsymbol{u}^{T}\bSig^{-1}\boldsymbol{v}\leq 0,\forall \boldsymbol{v}\in \mC\}$. For convenience we shall write $\mathcal{C}^{\circ }$ instead of $\mC_{\bSig}^{\circ} $ whenever no ambiguity arises. Next, note that the continuity of projections onto convex sets and the continuous mapping theorem imply that $T_{n}=n\Delta +O_{p}(\sqrt{n})$  where
\begin{align} \label{Eq.Delta}
\Delta=\|\bt-\Pi_{\bSig}(\bt|\Theta_{0})\|_{\bSig}^{2}-\|
\bt-\Pi_{\bSig}(\bt|\Theta_{1})\|_{\bSig}^{2},   
\end{align}
so $T_{n}\to\infty$ if and only if $\Delta >0$ in which case $\mathbb{P}_{\bt}(T_{n}\geq c_{\alpha })\rightarrow 1$. Using \eqref{Eq.Delta} we have:
\begin{theorem} \label{Thm-Geom}
In Type A Problems, the DT is consistent provided 
\begin{equation} \label{Eq.Con.DT.TypeA}
\bt \notin (\mC\cap\mL^{\bot})^{\circ}.
\end{equation}
In Type B Problems the DT is consistent for all $\bt \notin\mC$.
\end{theorem}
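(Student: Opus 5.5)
The plan is to reduce both claims to the sign of the population quantity $\Delta$ in \eqref{Eq.Delta}. Since $T_n=n\Delta+O_p(\sqrt n)$, consistency follows as soon as $\Delta>0$, because then $T_n/n\to\Delta>0$ in probability and $\mathbb{P}_{\bt}(T_n\ge c_\alpha)\to1$. Throughout I work in the $\bSig$-inner product $\langle\cdot,\cdot\rangle_{\bSig}$. All projections, orthogonal complements and polar cones are taken with respect to it, so $\mL^{\bot}$ means the $\bSig$-orthogonal complement.

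\emph{Type B.} Here $\Theta_1=\mathbb{R}^m$, so $\Pi_{\bSig}(\bt|\Theta_1)=\bt$ and $\Delta=\|\bt-\Pi_{\bSig}(\bt|\mC)\|_{\bSig}^2$. Because $\mC$ is closed, this distance is positive exactly when $\bt\notin\mC$. This gives the second claim immediately.

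\emph{Type A.} Here $\Delta=\|\bt-\Pi(\bt|\mL)\|^2-\|\bt-\Pi(\bt|\mC)\|^2$.

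First, $\Delta\ge0$ because $\mL\subset\mC$. Moreover, $\Delta=0$ holds iff $\Pi(\bt|\mL)$ also attains the minimal distance from $\bt$ to $\mC$. By uniqueness of projections onto closed convex sets, this happens iff $\Pi(\bt|\mC)=\Pi(\bt|\mL)$.

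Second, I decompose the cone. For $\boldsymbol c\in\mC$ and $\boldsymbol l\in\mL$ we have $\boldsymbol c+\boldsymbol l=2(\boldsymbol c/2+\boldsymbol l/2)\in\mC$, so $\mC+\mL=\mC$. Hence $\boldsymbol c-\Pi(\boldsymbol c|\mL)\in\mC\cap\mL^\bot$, which yields the orthogonal sum $\mC=\mL\oplus\mathcal K$ with $\mathcal K=\mC\cap\mL^{\bot}$. Writing $\bt=\Pi(\bt|\mL)+\bt^\bot$, the projection splits as
\[
\Pi(\bt|\mC)=\Pi(\bt|\mL)+\Pi(\bt^\bot|\mathcal K).
\]
This follows because the squared distance separates into an $\mL$-part and an $\mL^\bot$-part, and the two parts can be minimized independently.

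Therefore $\Delta=0$ iff $\Pi(\bt^\bot|\mathcal K)=\bzero$. By the standard characterization of projections onto closed convex cones (the Moreau decomposition), this holds iff $\bt^\bot\in\mathcal K^\circ$. Finally, $\Pi(\bt|\mL)$ is orthogonal to every element of $\mathcal K\subset\mL^\bot$. So $\langle\bt,\boldsymbol v\rangle=\langle\bt^\bot,\boldsymbol v\rangle$ for all $\boldsymbol v\in\mathcal K$, and hence $\bt\in\mathcal K^\circ$ iff $\bt^\bot\in\mathcal K^\circ$. Combining these, $\Delta>0$ iff $\bt\notin(\mC\cap\mL^\bot)^\circ$, which is \eqref{Eq.Con.DT.TypeA}.

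The main step needing care is the decomposition $\mC=\mL\oplus(\mC\cap\mL^\bot)$ together with the additivity of the projection. Both must be done in the $\bSig$-geometry, so that the polar cone in \eqref{Eq.Con.DT.TypeA} is the $\bSig$-polar, consistent with the paper's convention. The remaining steps are routine.
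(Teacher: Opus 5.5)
Your proof is correct and follows essentially the same route as the paper: reduce consistency to the sign of $\Delta$, identify $\Pi_{\bSig}(\bt|\mC)-\Pi_{\bSig}(\bt|\mL)$ with the projection onto $\mC\cap\mL^{\bot}$, and conclude by Moreau's theorem (Type B is handled identically). The difference is that you derive from first principles (uniqueness of projections and the orthogonal splitting $\mC=\mL\oplus(\mC\cap\mL^{\bot})$) the identities the paper cites from Proposition 3.12.6(h),(i) of Silvapulle and Sen, and you state explicitly that $\mL^{\bot}$ is the $\bSig$-orthogonal complement, which the paper leaves implicit; your argument also gives the converse, $\Delta>0$ if and only if $\bt\notin(\mC\cap\mL^{\bot})^{\circ}$.
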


\begin{remark}
Theorem \ref{Thm-Geom} has been informally stated but not proved Silvapulle and Sen (2005). 
\end{remark} 

An important special case of \eqref{Eq.Con.DT.TypeA} arises when the cone $\mC$ is defined by a finite set of linear inequalities in which case it is referred to as polyhedral cone, i.e., $\mL=\{\bt\in\mathbb{R}^{m}:\pmb{R\theta }=\bzero\} $ and $\mC=\{\bt\in\mathbb{R}^{m}:\pmb{R\theta } \geq \bzero\} $ for some $p\times m$ restriction matrix $\boldsymbol{R}$. Set $\pmb{\eta}=\pmb{R\theta}$ and rewrite \eqref{Eq.Hypotheses.TypeA} as $H_{0}:\pmb{\eta}\in\mathcal{M}$ versus $H_{1}:\pmb{\eta}\in\mathcal{Q}\backslash\mathcal{M}$ where $\mathcal{M}=\{\bzero\}$ and $\mathcal{Q}=\mathbb{R}_{+}^{p}\equiv\{\pmb{\eta}\in\mathbb{R} ^{p}:\pmb{\eta }\geq \bzero\}$. Let $\pmb{W}_{n}=\pmb{RS}_{n}$ be an estimator of $\pmb{\eta}$. It follows from \eqref{Eq.Sn} that $\sqrt{n}(\pmb{W}_{n}-\pmb{\eta })\Rightarrow \mathcal{N}_{p}(\bzero,\pmb{R\Sigma R}^{T})$. Next note that $\mathcal{M}^{\bot}=\mathbb{R}^{p}$ so $\mathcal{Q}\cap\mathcal{M}^{\bot}$ is nothing but $\mathcal{Q}$. By Proposition 3.12.8 in Silvapulle and Sen (2005) and a bit of algebra it can be shown that the polar cone of $\mathcal{Q}$ with respect to $\pmb{R\Sigma R}^{T}$ is given by $\mathcal{Q}^{\circ } =\{ \pmb{\eta}\in\mathbb{R}^{p}:\pmb{\eta}^{T}(\pmb{R\Sigma R}^{T})^{-1}\leq \bzero\}$. Substituting $\pmb{\eta}=\pmb{R\theta}$ we conclude that the DT is not consistent provided $\bt$ satisfies $\bt^{T}\pmb{R}^{T}(\pmb{R}\bSig\pmb{R}^{T})^{-1}\leq \bzero$, a condition that is easy to check.

\begin{example} \label{Ex.Positivity.I}
Consider testing 
\begin{equation} \label{Eq.TypeA.Canonical}
H_{0}:\bt = \bzero \text{ versus } H_{1}: \bt\in \mathbb{R}_{+}^{2}\setminus\bzero
\end{equation}
based on a sample $\pmb{X}_1,\pmb{X}_2,\ldots$ from $\mathcal{N}_{2}(\bzero,\pmb{I})$ distribution. Applying Theorem \ref{Thm-Geom} shows that the DT, which coincides with the LRT, is consistent whenever $\bt \in \mathbb{R}^2 \setminus \mathbb{R}_{-}^2$, where $\mathbb{R}_{-}^2$ denotes the negative quadrant. Therefore, if $\bt\in \{\bt:\theta _{1}>0,\theta _{2}\leq 0\}\cup \{\bt:\theta _{1}\leq 0,\theta _{2}>0\}$, i.e., $\bt$ belongs to the second or fourth quadrants, then the DT is consistent although $\bt$ does not belong to the alternative. For such $\bt$ and all $n\in\mathbb{N}$ there is a possibility of a Type III Error. 
\end{example}

Let $\mathcal{A}_{n}(\Theta_{0},\Theta_{1},\alpha)$ and $\mathcal{R}_{n}(\Theta_{0},\Theta_{1},\alpha)$  denote the acceptance and rejection regions, respectively, for testing $H_0: \bt\in \Theta_0$ versus $H_0: \bt\in \Theta_1\setminus\Theta_0$ using \eqref{Eq.DT} at the level $\alpha$. The following theorem describes the DT--based acceptance regions for Type A and B Problems. In what follows the symbol $\oplus$ denotes the Minkowski Sum, i.e., for any sets $\mathcal{U}$ and $\mathcal{V}$ define $\mathcal{U}\oplus \mathcal{V}=\{u+v:u\in \mathcal{U},v\in \mathcal{V}\}$, and $\mathrm{Ball}_{\bSig}(\bzero,c) =\{\pmb{x}:\|\pmb{x}\|_{\bSig}^{2}<c^{2}\}$ is the open ball with radius $c$ with respect to the norm $\|\boldsymbol{\cdot}\|_{\bSig}$.

\begin{theorem} \label{Thm-Acceptance.Regions}
For Type A Problems we have: 
\begin{equation} \label{Eq.1.Thm.Acc}
\mathcal{A}_{n}(\mL,\mC,\alpha)=(\mC\cap\mL^{\bot})_{\bSig_{n}}^{\circ}\oplus \mathrm{Ball}_{\bSig_{n}}(\bzero,\sqrt{\frac{c_{\alpha }}{n}}),
\end{equation}
whereas for Type B Problems we have: 
\begin{equation} \label{Eq.2.Thm.Acc}
\mathcal{A}_{n}(\mC,\mathbb{R}^{m},\alpha)=\mC\oplus\mathrm{Ball}_{\bSig_{n}}(\bzero,\sqrt{\frac{c_{\alpha }}{n}}).  
\end{equation}
\end{theorem}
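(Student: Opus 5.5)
The plan is to work with the fixed inner product $\langle\cdot,\cdot\rangle_{\bSig_n}$ throughout and write all projections as $\Pi=\Pi_{\bSig_n}$. I identify the acceptance region with $\{\pmb{S}_n: T_n<c_\alpha\}$, so it suffices to show that $T_n$ equals $n$ times a squared distance from $\pmb{S}_n$ to the appropriate closed convex set. The elementary fact used at the end is that, for any closed convex set $\mathcal{K}$, $\min_{\pmb{k}\in\mathcal{K}}\|\pmb{x}-\pmb{k}\|_{\bSig_n}^2<c_\alpha/n$ holds if and only if $\pmb{x}\in\mathcal{K}\oplus\mathrm{Ball}_{\bSig_n}(\bzero,\sqrt{c_\alpha/n})$. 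Indeed, writing $\pmb{x}=\Pi(\pmb{x}|\mathcal{K})+(\pmb{x}-\Pi(\pmb{x}|\mathcal{K}))$ gives one direction, and the definition of distance gives the other.

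\textbf{Type B.} Here $\Theta_1=\mathbb{R}^m$, so $\Pi(\pmb{S}_n|\mathbb{R}^m)=\pmb{S}_n$ and the second term in \eqref{Eq.DT} vanishes. Hence $T_n=n\|\pmb{S}_n-\Pi(\pmb{S}_n|\mC)\|_{\bSig_n}^2$, which is $n$ times the squared distance to $\mC$. The fact above then gives \eqref{Eq.2.Thm.Acc} immediately.

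\textbf{Type A.} Set $\mathcal{K}=\mC\cap\mL^{\bot}$, where $\bot$ is taken with respect to $\bSig_n$.

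First, I would show that $\mC=\mL\oplus\mathcal{K}$ as an orthogonal sum. Since $\mC$ is a convex cone containing the subspace $\mL$, we have $\pmb{c}+\pmb{l}=2(\pmb{c}/2+\pmb{l}/2)\in\mC$ for all $\pmb{c}\in\mC$ and $\pmb{l}\in\mL$, so $\mC+\mL\subseteq\mC$. Consequently $\pmb{c}-\Pi(\pmb{c}|\mL)\in\mC\cap\mL^\bot$, which gives $\mC\subseteq\mL\oplus\mathcal{K}$; the reverse inclusion is immediate.

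Second, write $\pmb{S}_n=\Pi(\pmb{S}_n|\mL)+\pmb{P}$ with $\pmb{P}\in\mL^\bot$. The orthogonal decomposition of $\mC$ makes the projection problem separate, so $\Pi(\pmb{S}_n|\mC)=\Pi(\pmb{S}_n|\mL)+\Pi(\pmb{P}|\mathcal{K})$. Therefore
\[
T_n/n=\|\pmb{P}\|^2-\|\pmb{P}-\Pi(\pmb{P}|\mathcal{K})\|^2=\|\Pi(\pmb{P}|\mathcal{K})\|^2 .
\]
The last equality uses the cone projection identity $\langle \pmb{P}-\Pi(\pmb{P}|\mathcal{K}),\Pi(\pmb{P}|\mathcal{K})\rangle=0$ together with Pythagoras.

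Third, apply Moreau's decomposition in $\mathbb{R}^m$: $\pmb{x}=\Pi(\pmb{x}|\mathcal{K})+\Pi(\pmb{x}|\mathcal{K}^\circ)$ with the two parts orthogonal. This gives $\|\Pi(\pmb{x}|\mathcal{K})\|=\mathrm{dist}(\pmb{x},\mathcal{K}^\circ)$. Because $\mathcal{K}\subseteq\mL^\bot$, the $\mL$-component of $\pmb{S}_n$ is orthogonal to $\mathcal{K}$, and so $\Pi(\pmb{S}_n|\mathcal{K})=\Pi(\pmb{P}|\mathcal{K})$. Hence $T_n=n\,\mathrm{dist}^2_{\bSig_n}(\pmb{S}_n,\mathcal{K}^\circ)$. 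Applying the fact from the first paragraph with the closed convex cone $\mathcal{K}^\circ=(\mC\cap\mL^\bot)^\circ_{\bSig_n}$ yields \eqref{Eq.1.Thm.Acc}.

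I expect the main obstacle to be the second step, namely justifying the splitting $\Pi(\pmb{S}_n|\mC)=\Pi(\pmb{S}_n|\mL)+\Pi(\pmb{P}|\mathcal{K})$. I would handle it by noting that for $\pmb{l}+\pmb{k}\in\mC$ with $\pmb{l}\in\mL$ and $\pmb{k}\in\mathcal{K}$, orthogonality gives
\[
\|\pmb{S}_n-\pmb{l}-\pmb{k}\|^2=\|\Pi(\pmb{S}_n|\mL)-\pmb{l}\|^2+\|\pmb{P}-\pmb{k}\|^2 ,
\]
so the minimization decouples into a minimization over $\pmb{l}$ and a separate one over $\pmb{k}$. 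Finally, I would check that all polar cones and orthogonal complements are taken with respect to $\bSig_n$, consistent with the notation $(\cdot)^\circ_{\bSig_n}$ in the statement.
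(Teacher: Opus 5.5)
Your proposal is correct and takes essentially the same route as the paper. In both, $T_n$ is written as $n$ times the squared $\bSig_n$-distance from $\pmb{S}_n$ to $\mC$ (Type B), or, via Moreau's decomposition, to $(\mC\cap\mL^{\bot})^{\circ}_{\bSig_n}$ (Type A), and the strict sublevel set of that distance is then identified with the Minkowski sum of the set and the open ball of radius $\sqrt{c_\alpha/n}$. The differences are that you prove the splitting $\mC=\mL\oplus(\mC\cap\mL^{\bot})$ and the identity $T_n=n\|\Pi_{\bSig_n}(\pmb{S}_n|\mC\cap\mL^{\bot})\|_{\bSig_n}^2$ directly instead of citing Silvapulle and Sen, and your chain of equalities is stated correctly. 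The paper's intermediate displays interchange a cone and its polar (e.g., $T_n=n\|\Pi_{\bSig_n}(\pmb{S}_n|(\mC\cap\mL^{\bot})^{\circ}_{\bSig_n})\|^2_{\bSig_n}$, and $n\|\Pi_{\bSig_n}(\pmb{S}_n|\mC)\|^2_{\bSig_n}$ in Type B), although its final conclusions agree with yours.
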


Theorems \ref{Thm-Geom} and \ref{Thm-Acceptance.Regions} are closely related. In fact \eqref{Eq.1.Thm.Acc} can be obtained from \eqref{Eq.Con.DT.TypeA} by adding the Minkowski addition of small ball to \eqref{Eq.Con.DT.TypeA}. This procedure is sometimes referred to as $\varepsilon-$fattening. One difference between Theorems \ref{Thm-Geom} and \ref{Thm-Acceptance.Regions} is that the acceptance regions are computed with respect to the estimated variance $\bSig_{n}$ whereas consistency is calculated with respect to the true variance $\bSig$. The difference between the two is small as $\mathcal{A}_{n}(\mL,\mC,\alpha )\rightarrow (\mC\cap\mL^{\bot})^{\circ}$ as $n\rightarrow \infty $. 

\medskip

In certain cases, one can obtain more explicit and intuitive characterization than those provided by Theorems \ref{Thm-Geom} (and \ref{Thm-Acceptance.Regions}). For example, consider the simplest possible ANOVA model 
\begin{equation} \label{Eq.ANOVA}
Y_{ij}=\theta_{i}+\varepsilon_{ij},  
\end{equation}
where $i=1,\ldots,K,$ $j=1,\ldots ,n_{i}$ and $\varepsilon_{ij}$ are IID with mean $0$\ and variance $\sigma^{2}$. Typically, it is assumed that under the null all means are equal, i.e., $\bt\in\mL$ where $\mL=\{\bt\in\mathbb{R}^{K}:\theta_{1}=\cdots=\theta_{K}\}$. The most common ordered alternatives are the simple, tree and umbrella orders specified by the cones $\mC_{s}=\{\bt\in\mathbb{R}^{K}:\theta_{1}\leq\cdots\leq\theta _{K}\}$, $\mC_{t}=\{\bt\in \mathbb{R}^{K}:\theta_{1}\leq \theta_{2},\ldots,\theta_{1}\leq \theta_{K}\}$ and $\mC_{u}=\{\bt\in \mathbb{R}^{K}:\theta _{1}\leq \cdots \leq \theta _{p}\geq \cdots \geq \theta_{K}\}$, respectively. All three alternative hypotheses arise frequently in practice and were instrumental in motivating the development of ORI; see Barlow et al. (1972) and van Eeden (2006) for surveys of early work in the area.

\begin{theorem} \label{Thm-3}
The DT for $H_{0}:\bt\in\mL$ versus $H_{1}:\bt\in\mC_{s}\backslash \mL$ is consistent provided that for some $1\leq i \leq K-1$ we have
\begin{equation} \label{Eq.Thm3}
\max_{1\leq s\leq i}\mathrm{Av}(\bt,s,i)
<\min_{i+1\leq t\leq K}\mathrm{Av}(\bt,i+1,t)
\end{equation}
where $\mathrm{Av}(\bt,u,v)=\sum_{u}^{v}w_{j}\theta
_{j}/\sum_{u}^{v}w_{j}$ and $w_{j}=\lim n_{j}/n\in (0,1) $ where $n=\sum_{j=1}^{K}n_{j}$ and $j=1,\ldots ,K$.
\end{theorem}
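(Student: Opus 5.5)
The plan is to reduce consistency to the statement $\Delta>0$ in \eqref{Eq.Delta}, then compute $\Delta$ through the weighted isotonic regression and use the classical max--min formulas.

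\emph{Setting and projections.} Take $\pmb{S}_n$ to be the vector of group sample means. Under \eqref{Eq.ANOVA} we have $\sqrt{n}(\pmb{S}_n-\bt)\Rightarrow\mathcal{N}_K(\bzero,\bSig)$ with $\bSig=\sigma^2\,\mathrm{diag}(1/w_1,\ldots,1/w_K)$. Hence $\|\pmb{x}\|_{\bSig}^2=\sigma^{-2}\sum_j w_jx_j^2$. The common factor $\sigma^{-2}$ does not affect the sign of $\Delta$, so $\Pi_{\bSig}(\cdot|\mC_s)$ is the weighted isotonic regression $\bt^*$ with weights $w_j$. Likewise $\Pi_{\bSig}(\cdot|\mL)$ is the constant vector equal to the weighted mean $\bar\theta_w$.

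\emph{Rewriting $\Delta$.} Since $\mL\subset\mC_s$ and $\mL$ is a subspace contained in the lineality space of $\mC_s$, projecting onto $\mC_s$ commutes with translations along $\mL$. Therefore $\Pi(\bt|\mL)=\Pi(\Pi(\bt|\mC_s)|\mL)$; concretely, the isotonic regression preserves the weighted mean. Combining the Pythagorean identity for the subspace $\mL$ with the characterization $\langle\bt-\bt^*,\bt^*-\pmb{v}\rangle_{\bSig}=0$ for $\pmb{v}\in\mL$ (valid because $\bt^*\pm\pmb{v}\in\mC_s$) gives
\begin{equation*}
\Delta=\|\bt-\bar\theta_w\pmb{1}\|_{\bSig}^2-\|\bt-\bt^*\|_{\bSig}^2=\|\bt^*-\bar\theta_w\pmb{1}\|_{\bSig}^2 .
\end{equation*}
So $\Delta>0$ if and only if $\bt^*$ is not constant, that is, if and only if $\theta^*_i<\theta^*_{i+1}$ for some $i$. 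By the discussion preceding Theorem~\ref{Thm-Geom}, $\Delta>0$ yields consistency. Equivalently, it shows $\bt\notin(\mC_s\cap\mL^{\bot})^{\circ}$.

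\emph{Max--min bounds.} Next we invoke the classical max--min representations of weighted isotonic regression (Robertson et al., 1988):
\begin{equation*}
\theta^*_j=\min_{t\geq j}\max_{s\leq j}\mathrm{Av}(\bt,s,t)=\max_{s\leq j}\min_{t\geq j}\mathrm{Av}(\bt,s,t).
\end{equation*}
Taking $t=i$ in the first formula with $j=i$ gives $\theta_i^*\le\max_{1\le s\le i}\mathrm{Av}(\bt,s,i)$. Taking $s=i+1$ in the second formula with $j=i+1$ gives $\theta_{i+1}^*\ge\min_{i+1\le t\le K}\mathrm{Av}(\bt,i+1,t)$. Condition \eqref{Eq.Thm3} therefore yields $\theta_i^*<\theta_{i+1}^*$. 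Hence $\bt^*\notin\mL$, so $\Delta>0$ and the DT is consistent.

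\emph{Main obstacle.} The delicate step is the identity $\Delta=\|\bt^*-\bar\theta_w\pmb{1}\|^2_{\bSig}$. It requires carefully using the fact that $\mL$ lies in the lineality space of $\mC_s$. One also has to check that the $\bSig$-geometry really reduces to $w$-weighted least squares, including that $\bSig_n$ and $n_j/n\to w_j$ only enter through the limit $\Delta$. The max--min formulas themselves are standard and can be quoted.
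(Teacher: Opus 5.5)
Your proposal is correct, but the key step is done differently from the paper.

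\textbf{Common part.} Both arguments reduce consistency to showing that the limiting isotonic regression $\bt^*$ is non-constant, i.e. $\theta_i^*<\theta_{i+1}^*$ for some $i$. The paper writes the DT explicitly as $n\widehat{\sigma}_n^{-2}\sum_j w_{j,n}(\widetilde{\theta}_{j,n}-\overline{Y}_n)^2$ and passes to the limit through the min--max formula. You instead derive $\Delta=\|\bt^*-\bar\theta_w\pmb{1}\|_{\bSig}^2$ directly from the projection characterization. These are equivalent.

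\textbf{Where you differ.} To get $\theta_i^*<\theta_{i+1}^*$ from \eqref{Eq.Thm3}, the paper invokes the loop-invariance of PAVA:
\begin{itemize}
\item it runs PAVA separately on $(\theta_1,\ldots,\theta_i)$ and on $(\theta_{i+1},\ldots,\theta_K)$;
\item it identifies the last fitted value of the first block as $\max_{s\le i}\mathrm{Av}(\bt,s,i)$ and the first fitted value of the second block as $\min_{t\ge i+1}\mathrm{Av}(\bt,i+1,t)$;
\item it argues that when these are strictly ordered no pooling occurs across the boundary, so $\bt^*$ is exactly the concatenation of the two blockwise fits.
\end{itemize}
You obtain the one-sided bounds $\theta_i^*\le\max_{s\le i}\mathrm{Av}(\bt,s,i)$ and $\theta_{i+1}^*\ge\min_{t\ge i+1}\mathrm{Av}(\bt,i+1,t)$ simply by plugging $t=i$ and $s=i+1$ into the min--max and max--min formulas.

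\textbf{What each route buys.} Yours is shorter and avoids the block-splitting claim, which the paper asserts rather than proves. You do not even need the max--min form: since $\max_{s\le i+1}\mathrm{Av}(\bt,s,t)\ge\mathrm{Av}(\bt,i+1,t)$ for every $t$, the lower bound on $\theta_{i+1}^*$ already follows from the min--max formula. The paper's route gives slightly more. Under \eqref{Eq.Thm3} it identifies $\theta_i^*$ and $\theta_{i+1}^*$ exactly, not just up to bounds, and shows that $\bt^*$ splits into the two blockwise isotonic fits. That is what underlies the ``level-set'' interpretation given after the theorem.

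\textbf{One small precision.} The fact $\bt^*\pm\pmb{v}\in\mC_s$ only yields $\langle\bt-\bt^*,\pmb{v}\rangle_{\bSig}=0$. Your cross term also needs $\langle\bt-\bt^*,\bt^*\rangle_{\bSig}=0$. That identity comes from $\mC_s$ being a cone: test the variational inequality against $\bzero$ and $2\bt^*$.
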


Equation \eqref{Eq.Thm3} states that the DT is consistent if there exists an index $i$ that partitions the means into at least two level--sets. If so, the means are, in a weak sense, increasing on average. Equivalently, since $\Pi_{\bSig}(\bt|\mL) = \mathrm{Av}(\bt,1,K)\pmb{1}_{K}$, where $\pmb{1}_{K}=(1,\ldots,1)^{T}$, must differ from $\Pi_{\bSig}(\bt|\mC_{s})$ when the test is consistent, it follows that the DT has no power on  $(\mC_{s}\cap\mL^{\bot })^{\circ} = \{\,\bt\in\mathbb{R}^{m}:\Pi_{\bSig}(\bt |\mC_{s})\in \mathrm{span}\{\pmb{1}_{K}\}\,\}$.

\begin{example} \label{Ex.ANOVA.K=3}
Specifically, suppose that $K=3$ and $n_{1}=n_{2}=n_{3}$. It is easy to verify that Equation \eqref{Eq.Thm3} holds for $i=1$  provided $\theta_{1}<\min\{\theta_{2},(\theta_{2}+\theta_{3})/2\}$ and therefore, either: $(a)$ $\theta_{1}<\theta_{2}\leq \theta_{3}$; or $(b)$ $\theta_{1}<\theta_{2}>\theta_{3}$ and $\theta_{1}<(\theta_{2}+\theta_{3})/2$. Similarly, when $i=2$ Equation \eqref{Eq.Thm3} holds whenever $ \max \{(\theta _{1}+\theta _{2})/2,\theta _{2}\}<\theta _{3}$ and therefore either $(c)$ $\theta_{1}\leq\theta_{2}<\theta_{3};$ or $(d)$ $\theta_{1}>\theta_{2}<\theta_{3}$ and $(\theta_{1}+\theta_{2})/2<\theta_{3}$. Clearly, if $\bt$ satisfies $(a)$ or $(c)$ then $\bt\in\mC_{s}$. However if $\bt$, satisfies $(b)$ or $(d)$, then $\bt\notin\mC_{s}$ and a Type III error will occur with probability tending to unity as $n\to\infty$.  
\end{example}

With slight modifications, the proof of Theorem \ref{Thm-3} can be adapted to deal with the tree and umbrella order. In particular, the DT for the tree order is consistent provided $\theta_{1}<\max \{\theta_{2},\ldots ,\theta_{K}\}$, i.e., if $\theta_{1}<\theta_{i}$ for some $i\in \{2,\ldots,K\}$. If so, for large $n$ the constrained estimator of $\theta_{1}$ satisfies
\begin{equation*}
\theta_{1,n}^{\ast}=\frac{\sum_{i\in J}w_{j}\theta _{i}}{\sum_{i\in J}w_{j}}+o_{p}(1)
\end{equation*}
where $J=\{ 1\leq j\leq K:\theta_{j}\leq \theta_{1}\}$ whereas the constrained estimator of $\theta _{i}$ for $i\notin J$ satisfies $\theta_{i,n}^{\ast}=\theta _{i}+o_{p}(1)$. Consequently for any $i \notin J$ we have $\mathbb{P}(\theta_{1,n}^{\ast}<\theta_{i,n}^{\ast})\rightarrow 1$ and it follows from the arguments in the proof of Theorem \ref{Thm-3} that the DT is consistent. The situation for the umbrella order is a bit more complicated. However, it can be demonstrated that the DT is consistent if either: $\left(i\right) $ the up--branch, i.e., the subvector $\left(\theta _{1},\ldots ,\theta _{p}\right)$, satisfies the conditions of Theorem \ref{Thm-3}, i.e., for some $1\leq i<p$ we have $\max_{1\leq s\leq i}\mathrm{Av}(\bt,s,i) <\min_{i+1\leq t\leq p}\mathrm{Av}(\bt,i+1,t) ;$ or if $(ii)$ the down branch, i.e., the subvector $(\theta_{p},\ldots,\theta_{K})$ satisfies the reverse condition, i.e., for some $p\leq i\leq K-1$ we have $\min_{p\leq s\leq i}\mathrm{Av}(\bt,s,i) >\max_{i+1\leq t\leq K}\mathrm{Av}(\bt,i+1,t)$.

\medskip

Studying the geometry of the DT enables an explicit  characterization of the set on which the DT is consistent and, therefore, also the set on which Type III errors will occur with probability increasing to one as $n \to \infty$. Examples \ref{Ex.Positivity.I} and \ref{Ex.ANOVA.K=3} demonstrate how Type III errors arise in common scenarios, emphasizing that Type III errors are not unfortunate pathological accidents or bizarre special cases, but the norm. In other words, Type III errors are ubiquitous in Type A Problems. Although the possibility of Type III errors in ORI is well known among researchers in the field, we are unaware of any formal investigation thereof. Clearly, Type III errors do not arise in Type B Problems where $\Theta_{0}\cup \Theta_{1}= \mathbb{R}^{m}$.

Although Type III errors are much less familiar than Type I \& II errors, their effect on the validity of our inferences are as severe. For example, a naive application of the DT may lead to the erroneous conclusion that a given treatment improves all outcomes under study, when, in fact, there is an improvement in a single outcome accompanied by deterioration in the remaining $m-1$ outcomes. Such errors inevitably result in poor decision making. For obvious reasons, Type III errors are particularly prevalent and dangerous in high--dimensional settings. In some applications, Type III errors are referred to as directional errors (Lehmann and Romano, 2005). For further discussion and additional perspectives, see Kaiser (1960), Shaffer (1972, 1990, 2002), Finner (1999), Oleckno (2008), Guo et al. (2010), Salkind (2010), Mayo and Spanos (2011), Grandhi et al. (2016), and Lin and Peddada (2024). 

\section{Safe tests for Type A Problems} \label{Sec.SafeTest}

We start by defining safety in testing.

\begin{definition}
A $\alpha$ level test $T_{n}$ with rejection region $\mathcal{R}$ is said to be safe if for all $\alpha\in (0,1) $ and each fixed $\bt\notin \Theta_{1}$ we have 
\begin{equation*}
\lim_{n}\mathbb{P}_{\bt}(T_{n}\in \mathcal{R}) = 0.
\end{equation*}
\end{definition}

Thus, a test is safe if the probability that it commits Type III errors decreases to $0$ as $n\rightarrow \infty $.

\begin{theorem} \label{Thm-DT.not.SAFE}
The DT \eqref{Eq.DT} is not safe for testing \eqref{Eq.Hypotheses.TypeA} in Type A Problems.
\end{theorem}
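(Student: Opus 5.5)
To show that the DT is not safe it is enough to exhibit one Type A Problem, one level $\alpha$, and one fixed $\bt\notin\Theta_1=\mC$ for which $\mathbb{P}_{\bt}(T_n\geq c_\alpha)\not\to 0$. By Theorem \ref{Thm-Geom} we know more: the probability tends to one at every $\bt\notin(\mC\cap\mL^{\bot})^{\circ}$. So the plan reduces to a geometric claim. We must find a point lying in the nonempty set
\begin{equation*}
\mathbb{R}^m\setminus\big(\mC\cup(\mC\cap\mL^{\bot})^{\circ}\big).
\end{equation*}

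For a single concrete counterexample, Example \ref{Ex.Positivity.I} suffices. Take $\mL=\{\bzero\}$, $\mC=\mathbb{R}^2_+$ and $\bSig=\bI$. Then $\mL^{\bot}=\mathbb{R}^2$, and $(\mC\cap\mL^{\bot})^{\circ}=\mathbb{R}^2_-$. The point $\bt=(1,-1)^T$ lies outside both $\mathbb{R}^2_+$ and $\mathbb{R}^2_-$. We verify directly that $\Delta>0$ via \eqref{Eq.Delta}. Here $\Pi(\bt|\mL)=\bzero$ and $\Pi(\bt|\mC)=(1,0)^T$, so $\Delta=2-1=1>0$. Since $T_n=n\Delta+O_p(\sqrt n)$, we get $\mathbb{P}_{\bt}(T_n\geq c_\alpha)\to1\neq0$ for every $\alpha$. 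This contradicts the definition of safety.

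If the statement is meant to hold for every nontrivial Type A Problem, i.e.\ every $\mL\subsetneq\mC$ with $\mC$ not a subspace, I would argue generally as follows. Write $\mathcal{K}=\mC\cap\mL^{\bot}$. Since $\mC=\mL\oplus\mathcal{K}$ and $\mC\neq\mL$, the cone $\mathcal{K}$ is nonzero and pointed. Pick $\bm{k}\in\mathcal{K}\setminus\{\bzero\}$ and perturb it by a small vector $\bm{v}$ with $\bt=\bm{k}+\epsilon\bm{v}\notin\mC$. Such a $\bm{v}$ exists whenever $\bm{k}$ lies on the boundary of $\mC$, and $\mathcal{K}$ has nonzero boundary points because $\mC\ne\mathbb{R}^m$. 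For small $\epsilon$ we have $\langle\bt,\bm{k}\rangle_{\bSig}>0$, so $\bt\notin\mathcal{K}^{\circ}$. Theorem \ref{Thm-Geom} then gives consistency at this $\bt\notin\mC$.

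The main obstacle is this general geometric step. It requires checking that a nonzero element of $\mathcal{K}$ on $\partial\mC$ exists; the degenerate case $\mC=\mathbb{R}^m$ must be excluded, since then no Type III error is possible. I would present the explicit Example \ref{Ex.Positivity.I} computation as the proof and mention the general construction as a remark.
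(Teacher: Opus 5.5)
Your main argument is correct. Failing the definition of safety only requires one Type A problem and one $\bt\notin\mC$ at which the rejection probability does not vanish. Your choice $\mL=\{\bzero\}$, $\mC=\mathbb{R}^2_+$, $\bSig=\bI$, $\bt=(1,-1)^T$ with $\Delta=2-1=1>0$ gives exactly that.

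The paper argues differently, for an arbitrary Type A problem. It asserts $\mC\cup(\mC\cap\mL^{\bot})^{\circ}\subseteq\mC\cup\mC^{\circ}$ and then uses $\mC\cup\mC^{\circ}\neq\mathbb{R}^m$. That route aims at more generality than is available. Since $\mC\cap\mL^{\bot}\subseteq\mC$, polarity gives $\mC^{\circ}\subseteq(\mC\cap\mL^{\bot})^{\circ}$, which is the opposite direction. In fact $(\mC\cap\mL^{\bot})^{\circ}=\mL\oplus\mC^{\circ}$ (with $\mL^{\bot}$ taken in the $\bSig$ inner product), so the asserted inclusion fails whenever $\mL\neq\{\bzero\}$ and $\mC\neq\mathbb{R}^m$.

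The general conclusion can also fail. Suppose $\dim\mL^{\bot}=1$, e.g.\ $K=2$ with $\mC=\{\theta_1\le\theta_2\}$. Then $\mC$ is a closed half-space with boundary $\mL$, and the $\bSig_n$-projection of any $\pmb{S}_n\notin\mC$ onto $\mC$ equals its projection onto $\mL$. Hence $T_n=0$ off $\mC$, and $\mathbb{P}_{\bt}(T_n\ge c_\alpha)\to0$ at every $\bt\notin\mC$ for any $c_\alpha>0$, so the DT is safe there. Your example has $\mL=\{\bzero\}$, where the two sets in the paper's inclusion coincide, so it is unaffected.

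The same half-space case is where your general remark breaks. Your class, every $\mL\subsetneq\mC$ with $\mC$ not a subspace, contains it. There $\mathcal{K}=\mC\cap\mL^{\bot}$ is a single ray meeting $\partial\mC=\mL$ only at $\bzero$. So $\mC\neq\mathbb{R}^m$ does not produce a nonzero $\bm{k}\in\mathcal{K}\cap\partial\mC$, and the obstacle you flagged cannot be overcome under your hypotheses.

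The missing hypothesis is $\dim\mL^{\bot}\ge2$, and with it a connectedness argument works:
\begin{itemize}
\item The closed cones $\mathcal{K}$ and $\mathcal{K}^{\circ}\cap\mL^{\bot}$ meet only at $\bzero$.
\item Neither is $\{\bzero\}$: the first because $\mC\neq\mL$, the second by the bipolar theorem because $\mC\neq\mathbb{R}^m$.
\item If they covered $\mL^{\bot}$, removing $\bzero$ would split the connected set $\mL^{\bot}\setminus\{\bzero\}$ into two disjoint, nonempty, relatively closed pieces, which is impossible.
\item Any uncovered $\bt\in\mL^{\bot}$ lies outside $\mC$ and outside $\mathcal{K}^{\circ}=\mL\oplus(\mathcal{K}^{\circ}\cap\mL^{\bot})$, so Theorem~\ref{Thm-Geom} gives consistency at $\bt$.
\end{itemize}
Combined with the half-space case, the DT is unsafe exactly when $\mL\subsetneq\mC\subsetneq\mathbb{R}^m$ and $\dim\mL^{\bot}\ge2$.

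Separately, $\mathcal{K}$ need not be pointed (e.g.\ $\mL=\{\bzero\}$ with $\mC$ a half-plane), though you never use this.
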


Theorem \ref{Thm-DT.not.SAFE} shows that the DT is not safe. Hence, it validates the principled objections to the use of ORI--based methods, as discussed in the Introduction and further demonstrated in Examples \ref{Ex.Positivity.I}  and \ref{Ex.ANOVA.K=3}. In the following, a safe test that alleviates these concerns is introduced and studied.

\subsection{Formulation} \label{SubSec.Safe.General}

We introduce safe tests for Type A Problems; more  general safe tests are briefly discussed in Section \ref{Sec.Summary}. We start with some notation. Let $t_n$ denote the realized value of $T_n$, the DT for testing \eqref{Eq.Hypotheses.TypeA}. Denote the associated p--value by $\alpha ^{\ast}=\mathbb{P}_{\bzero}(T_{n}\geq t_{n})$. Next, consider an auxiliary system of hypotheses; in general these are of the form 
\begin{align*} 
H_{0}':\bt\in\Theta_0\cup\Theta_1 \text{  versus  } H_{1}':\bt\notin\Theta_0\cup\Theta_1.    
\end{align*}
Note that when $(\Theta_0,\Theta_1)=(\mL,\mC)$, i.e., in Type A problems, the auxilliary hypotheses reduce to \eqref{Eq.Hypotheses.TypeB}, a Type B testing problem. Let $T_n^{'}$ denote the corresponding DT and let $t_n^{'}$ be its realized value. It is well known that for any $\gamma \in (0,1)$ the critical value $c_{\gamma}'$ for testing \eqref{Eq.Hypotheses.TypeB} solves $\gamma =\sup_{\bt\in \mC}\mathbb{P}_{\bt}(T_{n}^{'}\geq c_{\gamma}') =\mathbb{P}_{\bzero}(T_{n}^{'}\geq c_{\gamma}')$; correspondingly the associated p--value is $\gamma^{\ast}=\mathbb{P}_{\bzero}(T_{n}^{'}\geq t_{n}^{'})$.

The pair $(\gamma^{\ast},\alpha^{\ast})$ 
reports on the outcome of the two tests and summarizes the evidence in the data. Consider the mapping $(\gamma^{\ast},\alpha^{\ast}) \mapsto (D_{1},D_{2})$
where $D_{1}=\mathbb{I}_{\{{\gamma^{\ast}\geq \gamma}\}}$ and $D_{2}=\mathbb{I}_{\{{\alpha{\ast}\leq \alpha}\}}$ for some specified values of $\gamma$ and $\alpha$. In particular $D_{1}=1$ indicates that the auxiliary null is not rejected at the level $\gamma$. Equivalently, a certificate of validity of level $\gamma$, is issued. The event $D_{2}=1$ indicates that the original null is rejected at the level $\alpha$. See Table \ref{Table.Decisions} for the relevant possibilities.

\FloatBarrier
\begin{table}[!ht]
\centering
\caption{The decision space}
\label{Table.Decisions}
\begin{tabular}{l|l|l}
Certificate $(D_{1})$ & Original Test $(D_{2})$ & 
Conclusion \\ \hline
\multicolumn{1}{c|}{$1$} & \multicolumn{1}{|c|}{$1$} & Safely, reject the Null. \\ 
\multicolumn{1}{c|}{$1$} & \multicolumn{1}{|c|}{$0$} & Do not reject the Null. \\ 
\multicolumn{1}{c|}{$0$} & \multicolumn{1}{|c|}{$1$} & A likely Type III error. Revisit assumptions. \\ 
\multicolumn{1}{c|}{$0$} & \multicolumn{1}{|c|}{$0$} & Do not reject the Null. Revisit assumptions.
\end{tabular}
\end{table}

Henceforth we shall refer to the tests $T_{n}$ and $T_{n}^{'}$ as the base--tests. In the literature, e.g., Raubertes et al. (1986), these tests are sometimes denoted by $T_{n,01}$ and $T_{n,12}$. We shall combine the base--tests to derive a composite safe test. For convenience we will denote the acceptance regions of the base tests by $\mathcal{A}_{n}=\mathcal{A}_{n}(\mL,\mC,\alpha)$ and $\mathcal{A}_{n}^{'}=\mathcal{A}_{n}(\mC,\mathbb{R}^{m},\gamma)$, respectively. The corresponding rejection regions are accordingly labeled. 

\begin{definition} \label{Def-Safe}
Fix $\alpha$ and $\gamma$. Let $T_{n}^{\mathrm{SAFE}}$ be a test for \eqref{Eq.Hypotheses.TypeA} with rejection region 
\begin{equation} \label{Eq.Rn.SAFE}
\mathcal{R}_{n}^{\mathrm{SAFE}}=\mathcal{R}_{n}^{\mathrm{SAFE}}(\alpha,\gamma) =\mathcal{A}_{n}^{'} \cap \mathcal{R}_{n}=\{\pmb{S}_{n}\in \mathbb{R}^{m}:T_{n}^{'}<c_{\gamma}',T_{n}\geq c_{\alpha}\}.
\end{equation}
\end{definition}
It immediately follows that $\mathcal{R}_{n}^{\mathrm{SAFE}}=\{\pmb{S}_{n}\in \mathbb{R}^{m}:T_{n}^{\mathrm{SAFE}}\geq c_{\alpha }\}$ where $T_{n}^{\mathrm{SAFE}}=T_{n}\mathbb{I}_{\{T_{n}^{'}<c_{\gamma }'\}}$. Hence, associate $T_{n}^{\mathrm{SAFE}}$ with a non--negative RV and reject the null only when $T_{n}^{\mathrm{SAFE}}\geq c_{\alpha }^{\mathrm{SAFE}}$ where $c_{\alpha }^{\mathrm{SAFE}}$ solves $\alpha =\mathbb{P}(T_{n}^{\mathrm{SAFE}}\geq c_{\alpha}^{\mathrm{SAFE}})$. 

\subsection{A safe test in two dimensions}

To fix ideas, consider Example \ref{Ex.Positivity.I}. It is well known that the DT for \eqref{Eq.TypeA.Canonical} is $T_{n}=n\|\Pi(\overline{\pmb{X}}_{n}|\mathbb{R}_{+}^{2})\| _{\bI}^{2}$. Moreover, under the null the DT is distributed as the mixture $\chi _{0}^{2}/4+\chi _{1}^{2}/2+\chi _{2}^{2}/4$ where $\chi _{i}^{2}$ is a chi--square RV with $i$ degrees of freedom. As noted in Example \ref{Ex.Positivity.I} if $\bt$ is in the second or fourth quadrants, then the DT is consistent although $\bt$ does not belong to the alternative. In other words, for every $n\in\mathbb{N}$ there is a nonzero probability of a Type III error, and as $n\to\infty$ a Type III error will occur with probability tending to one. The auxiliary hypotheses reduce to 
$H_{0}:\bt\in\mathbb{R}_{+}^{2}$ versus $H_{1}:\bt\notin \mathbb{R}_{+}^{2}$ tested using the DT $T_{n}^{'}=n\|\Pi(\overline{\pmb{X}}_{n}|\mathbb{R}_{-}^{2})\|_{\bI}^{2}$. It is well known that in this specific setting $T_n$ and $T_n'$ have the same distribution, hence $c_{\alpha} = c_{\alpha}'$ for all $\alpha\in(0,1)$.

By Theorem \ref{Thm-Acceptance.Regions} the acceptance regions of the base--tests are $\mathcal{A}_{n}=\mathbb{R}_{-}^{2}\oplus\mathrm{Ball}(\bzero,\sqrt{c_{\alpha }/n})$ and $
\mathcal{A}_{n}^{'}=\mathbb{R}_{+}^{2} \oplus \mathrm{Ball}(\bzero,\sqrt{c_{\gamma }/n})$, respectively. Let $\mathfrak{L}$ and $\mathfrak{L}^{'}$ denote the curves representing the boundaries of $\mathcal{A}_{n}$ and $\mathcal{A}_{n}^{'}$ respectively. Note that the curve $\mathfrak{L}$, in red, is at a distance $\sqrt{c_{\alpha }/n}$ from the negative orthant $\mathbb{R}_{-}^{2}$ whereas $\mathfrak{L}^{'}$, in blue, is at a distance $\sqrt{c_{\gamma}/n}$ from the positive orthant $\mathbb{R}_{+}^{2}$. Clearly $\mathcal{A}_{n}$ is lower set while $\mathcal{A}_{n}^{'}$ is an upper set (Shaked and Shanthikumar, 2007). Figure \ref{Fig.Regions} provides a graphical illustration. 

\FloatBarrier
\begin{figure}[ht!] 
\centering
\begin{tikzpicture}[domain=0:4]
    \draw[<->] (-5,0) -- (5,0) node[right] {$\theta_1$};
    \draw[<->] (0,-5) -- (0,5) node[above] {$\theta_2$};
    
    \draw[domain=0:5, smooth, variable=\x, blue,thick] plot ({\x}, {-2.5});
    \draw[domain=0:5, smooth, variable=\y, blue,thick] plot ({-2.5}, {\y});
    \draw[domain=-5:0, smooth, variable=\x, red,thick] plot ({\x}, {2.5});
    \draw[domain=-5:0, smooth, variable=\y, red,thick] plot ({2.5}, {\y});
    \draw [red,thick,domain=0:90] plot ({2.5*cos(\x)}, {2.5*sin(\x)});
    \draw [blue,thick,domain=180:270] plot ({2.5*cos(\x)}, {2.5*sin(\x)});
    
    \node at (-4, 4) {$R_1$};
    \node at (-2, 4) {$R_2$};
    \node at (3, 3) {$R_3$};
    \node at (4, -2) {$R_4$};
    \node at (4, -4) {$R_5$};
    \node at (2, -4) {$R_6$};
    \node at (-3, -3) {$R_7$};
    \node at (-4, 2) {$R_8$};
    \node at (-1, 1) {$R_9$};
    \node at (1, 1) {$R_{10}$};
    \node at (1, -1) {$R_{11}$};
    \node at (-1, -1) {$R_{12}$};
\end{tikzpicture}
\caption{The acceptance rejection region of $T_n$ lies below the (red) curve $\mathfrak{L}$ whereas the acceptance region of $T_n^{'}$ lies above the (blue) curve $\mathfrak{L}^{'}$. For small values of $\gamma$ the curve $\mathfrak{L}^{'}$ lies at large distance from $\mathbb{R}_{+}^2$ whereas when $\gamma$ is large the curve closely hugs $\mathbb{R}_{+}^2$.}
\label{Fig.Regions}
\end{figure}
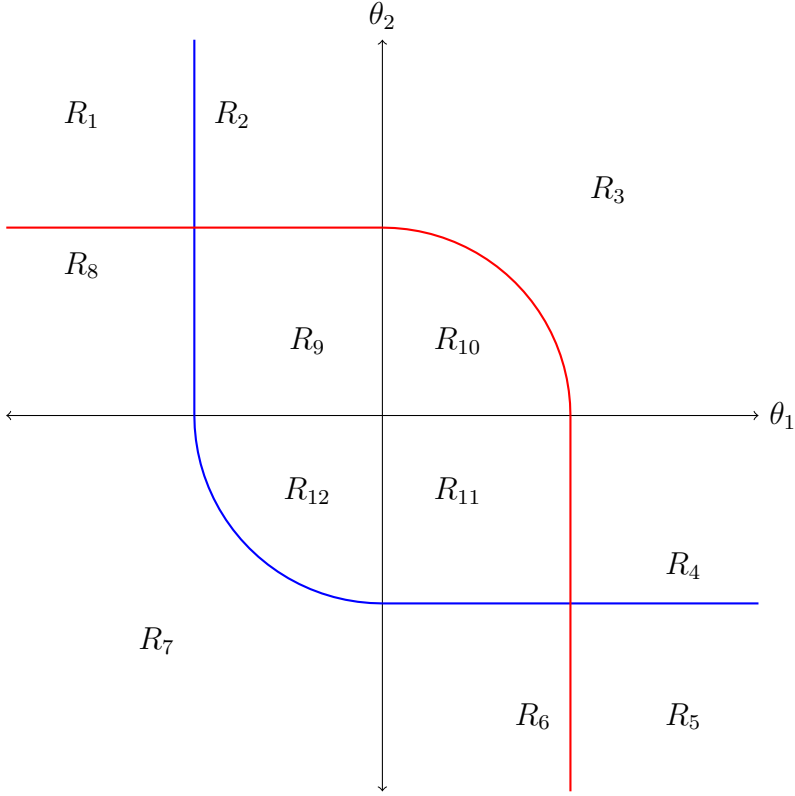
\FloatBarrier

\noindent The curves $\mathfrak{L}$ and $\mathfrak{L}^{'}$ together with the axes delineate twelve regions in $\mathbb{R}^{2}$ denoted $R_{1},\ldots ,R_{12}.$ The region $R_{1}$ is located in the upper left corner of Figure \ref{Fig.Regions} and all other regions are numbered according to their location on the imaginary spiral beginning in $R_{1}$ and terminating at $R_{12}.$ The rejection region of $T_n$ is $\mathcal{R}=R_{1}\cup R_{2}\cup R_{3}\cup R_{4}\cup R_{5}$ and
that of $T_n^{'}$ is $\mathcal{R}'=R_{5}\cup R_{6}\cup R_{7}\cup R_{8}\cup R_{1}$. Both tests accept on $R_{9}\cup R_{10}\cup R_{11}\cup R_{12}$. The regions $R_{1}$ and $R_{5}$ are in the rejection region of both tests; moreover, for all $\gamma\in(0,1)$ Type III errors occur on $R_{1}\cup R_{5}$. The rejection region of $T_{n}^{\mathrm{SAFE}}$, see Definition \ref{Def-Safe}, is $R_{2}\cup R_{3}\cup R_{4}$. Further observe that under the original null and for any fixed significance levels $\alpha $ and $\gamma $ the level of the composite test, denoted by $\alpha^{\mathrm{SAFE}}$, is
\begin{eqnarray*}
\alpha ^{\mathrm{SAFE}} &=&\mathbb{P}_{\bzero}\
(\overline{\pmb{X}}_{n}\in R_{2}\cup R_{3}\cup R_{4})=\mathbb{P}_{\bzero}(\overline{\pmb{X}}_{n}\in \mathcal{R})-\mathbb{P}_{\bzero}(\overline{\pmb{X}}_{n}\in R_{1}\cup R_{5}) \\
&=&\alpha -(\mathbb{P}_{\bzero}(\overline{\pmb{X}}_{n}\in R_{1})+\mathbb{P}_{\bzero}(\overline{\pmb{X}}_{n}\in R_{5})) \\
&=&\alpha-\mathbb{P}_{\bzero}(\overline{X}_{1,n}\leq -
\sqrt{\frac{c_{\gamma }}{n}},\overline{X}_{2,n}\geq \sqrt{\frac{c_{\alpha }}{n}})
-\mathbb{P}_{\bzero}(\overline{X}_{1,n}\geq \sqrt{\frac{
c_{\alpha }}{n}},\overline{X}_{2,n}\leq -\sqrt{\frac{c_{\gamma }}{n}}) \\
&=&\alpha -\mathbb{P}_{\bzero}
(\sqrt{n}\overline{X}_{1,n}\leq -\sqrt{c_{\gamma }})\mathbb{P}_{
\bzero}(\sqrt{n}\overline{X}_{2,n}\geq \sqrt{c_{\alpha}})-\mathbb{P}_{\bzero}(\sqrt{n}\overline{X}_{1,n}\geq \sqrt{c_{\alpha}})\mathbb{P}_{\bzero}\sqrt{n}
\overline{X}_{2,n}\leq -\sqrt{c_{\gamma }})
\\
&=&\alpha -2(1-\Phi (\sqrt{c_{\alpha }}))(1-\Phi (\sqrt{c_{\gamma }})),
\end{eqnarray*}
where $\Phi$ is the DF of a standard normal RV. Clearly $\alpha^{\mathrm{SAFE}}\leq\alpha$ for all $\gamma\in (0,1)$.  Moreover, for any fixed $\gamma$ we can choose $\alpha$ so
\begin{equation} \label{Eq.Find.Level.Safe}
\alpha -2(1-\Phi (\sqrt{c_{\alpha }}))(1-\Phi (\sqrt{c_{\gamma }}))=\alpha ^{\mathrm{SAFE}}  
\end{equation}
for any prechosen level $\alpha^{\mathrm{SAFE}}\in (0,1)$. It turns out that for standard significance levels, i.e., whenever $\alpha \leq 0.1$ and $\gamma \leq 0.1,$ $\alpha^{\mathrm{SAFE}}$ is very close to $\alpha$. For example when $\alpha =\gamma=0.1$ we find that $\alpha ^{\mathrm{SAFE}}=0.0999$ and when $\alpha =0.1$ and $\gamma =0.5$ then $\alpha ^{\mathrm{SAFE}}=0.0988$. For smaller $\alpha$, i.e., $\alpha =0.05$ the differences between $\alpha$ and $\alpha^{\mathrm{SAFE}}$ are even smaller. 

A bit of reflection shows that the rejection region of $T_{n}^{\mathrm{SAFE}}$ can be reexpressed as $\{\overline{\pmb{X}}_{n}\in \mathbb{R}^{2}:T_{n}^{\mathrm{SAFE}}\geq c_{\alpha}^{\mathrm{SAFE}}\}$ where $c_{\alpha }^{\mathrm{SAFE}}$ is the $\alpha$--level critical value of the test $T_{n}^{\mathrm{SAFE}}$ which is given by
\begin{equation*}
T_{n}^{\mathrm{SAFE}}=n\{\overline{X}_{1,n}^{2}\mathbb{I}_{\{\overline{X}_{1,n}\ge 0,-\sqrt{\frac{c_{\gamma }}{n}} \le\overline{X}_{2,n} < 0\}} +\overline{X}_{2,n}^{2}\mathbb{I}_{\{-\sqrt{\frac{c_{\gamma }}{n}} \le \overline{X}_{1,n} <0, \overline{X}_{2,n} >0\}} +(\overline{X}_{1,n}^{2}+\overline{X}_{2,n}^{2})\mathbb{I}_{\{\overline{X}_{1,n}\geq 0,\overline{X}_{2,n}\geq
0\}}\}.
\end{equation*}
The value of $c_{\alpha }^{\mathrm{SAFE}}$ is obtained by solving the equation $\widetilde{\alpha}=\mathbb{P}(\chi_{1}^{2}\geq c)/2+\mathbb{P}(\chi _{2}^{2}\geq c)/4$ for $c$ where $\widetilde{\alpha }$ solves (\ref{Eq.Find.Level.Safe}) for a specified value of $\alpha^{\mathrm{SAFE}}$ guaranteeing that
$$ 
\mathbb{P}_{\bzero}(T_{n}^{\mathrm{SAFE}}\geq c_{\alpha }^{\mathrm{SAFE}}) = \alpha.  
$$
Moreover, since $\alpha^{\mathrm{SAFE}}\leq \alpha$, we have $c_{\alpha}^{\mathrm{SAFE}}\leq c_{\alpha}$.

Given an explicit form of $T_{n}^{\mathrm{SAFE}}$, see the display above, its distribution can be studied for any $\bt\in \mathbb{R}^{2}$. For example, it is immediate that for any $\bt\in \mC$ we have 
\begin{equation*}
\mathbb{P}_{\bt}(T_{n}^{\mathrm{SAFE}}\neq T_{n})\rightarrow 0
\end{equation*}
for any fixed $n$ as $\gamma \rightarrow 0$ and similarly for any fixed $\gamma $ as $n\rightarrow \infty$. Hence $T_{n}$ and $T_{n}^{\mathrm{SAFE}}$ are expected to perform similarly on the alternative. However, for any $\bt$ in the second or fourth quadrant it is easy to verify that $T_{n}^{\mathrm{SAFE}}=0$ for all $\gamma \in \left(0,1\right)$ if $n$ is large enough. Therefore, $T_{n}^{\mathrm{SAFE}}$, above, is indeed a safe test. 

\subsection{The general case}

We conclude this section with a general Theorem addressing safe tests for Type A Problems. 

\begin{theorem} \label{Thm-4}
Consider testing \eqref{Eq.Hypotheses.TypeA} using $T_{n}^{\mathrm{SAFE}}$ for some fixed levels $\alpha $ and $\gamma$. Then
\begin{equation} \label{Eq.1.Thm.4}
\alpha^{\mathrm{SAFE}}=\sup_{\bt\in\mL}\mathbb{P}_{\bt}(T_{n}^{\mathrm{SAFE}}\geq c_{\alpha })\leq \alpha .
\end{equation}
Moreover, the test $T_{n}^{\mathrm{SAFE}}$ is safe. Furthermore, for all $\bt\in \mathrm{int}(\mC\backslash\mL)$ we have
\begin{equation} \label{Eq.2.Thm.4}
\mathbb{P}_{\bt}(T_{n}\geq c_{\alpha })\leq \mathbb{P}_{\bt}(T_{n}^{\mathrm{SAFE}}\geq c_{\alpha }^{\mathrm{SAFE}})+o(1)  
\end{equation}
where $c_{\alpha }^{\mathrm{SAFE}}$ is the $\alpha$--level critical value of $T_{n}^{\mathrm{SAFE}}$.
\end{theorem}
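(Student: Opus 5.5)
The proof splits into three parts matching the three claims of Theorem \ref{Thm-4}.

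\emph{Level.} The level bound \eqref{Eq.1.Thm.4} follows from inclusion of rejection regions. By Definition \ref{Def-Safe}, $\mathcal{R}_{n}^{\mathrm{SAFE}}=\mathcal{A}_{n}'\cap\mathcal{R}_{n}\subseteq\mathcal{R}_{n}$. Hence for every $\bt\in\mL$,
\[
\mathbb{P}_{\bt}(T_{n}^{\mathrm{SAFE}}\geq c_{\alpha})=\mathbb{P}_{\bt}(\pmb{S}_n\in\mathcal{R}_n^{\mathrm{SAFE}})\leq\mathbb{P}_{\bt}(T_n\geq c_\alpha).
\]
Taking the supremum over $\mL$ and using that $c_\alpha$ is the $\alpha$-level critical value of the DT gives $\alpha^{\mathrm{SAFE}}\leq\alpha$. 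The inequality $\{T_n^{\mathrm{SAFE}}\ge c_\alpha\}\subseteq\{T_n\ge c_\alpha\}$ holds because $c_\alpha>0$ and $T_n^{\mathrm{SAFE}}=T_n\mathbb{I}_{\{T_n'<c_\gamma'\}}$.

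\emph{Safety.} Fix $\bt\notin\mC$; since $\mL\subset\mC$, this covers all points outside $\Theta_0\cup\Theta_1$. By Theorem \ref{Thm-Geom}, the Type B DT $T_n'$ is consistent at such $\bt$, so $\mathbb{P}_{\bt}(T_n'\ge c'_\gamma)\to1$. Concretely, $T_n'=n\Delta'+O_p(\sqrt n)$ with $\Delta'=\|\bt-\Pi_{\bSig}(\bt|\mC)\|^2_{\bSig}>0$. Therefore $\mathbb{P}_{\bt}(\pmb S_n\in\mathcal{A}_n')\to0$, and a fortiori $\mathbb{P}_{\bt}(\pmb S_n\in\mathcal{R}^{\mathrm{SAFE}}_n)\to0$. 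This must hold whichever critical value is used. If one rejects with $c_\alpha^{\mathrm{SAFE}}$, the event $\{T_n^{\mathrm{SAFE}}\ge c^{\mathrm{SAFE}}_\alpha\}$ still forces $T_n'<c'_\gamma$ because $c^{\mathrm{SAFE}}_\alpha>0$, so the same bound applies for every $\alpha,\gamma\in(0,1)$.

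\emph{Power comparison.} For \eqref{Eq.2.Thm.4}, fix $\bt\in\mathrm{int}(\mC\setminus\mL)$. First I would show $\mathbb{P}_{\bt}(T_n'\ge c_\gamma')\to0$. Since $\bt$ is interior to $\mC$, there is $\delta>0$ with $\mathrm{Ball}_{\bSig}(\bt,\delta)\subset\mC$. Because $\pmb S_n\to\bt$ in probability, $\pmb S_n\in\mC$ with probability tending to one, and then $\Pi_{\bSig_n}(\pmb S_n|\mC)=\pmb S_n$, so $T_n'=0<c_\gamma'$. Equivalently, one can use Theorem \ref{Thm-Acceptance.Regions}, since $\mC\subset\mathcal{A}_n'$. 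Consequently $\mathbb{P}_{\bt}(T_n^{\mathrm{SAFE}}\ne T_n)\to0$. Second, $c_\alpha^{\mathrm{SAFE}}\le c_\alpha$: because $T_n^{\mathrm{SAFE}}\le T_n$ pointwise, its null tail at any threshold is no larger, so the threshold achieving level $\alpha$ is at most $c_\alpha$. Combining the two facts,
\[
\mathbb{P}_{\bt}(T_n\ge c_\alpha)\le\mathbb{P}_{\bt}(T_n^{\mathrm{SAFE}}\ge c_\alpha)+\mathbb{P}_{\bt}(T_n^{\mathrm{SAFE}}\neq T_n)\le\mathbb{P}_{\bt}(T^{\mathrm{SAFE}}_n\ge c^{\mathrm{SAFE}}_\alpha)+o(1).
\]

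The main obstacle is making the claim $c_\alpha^{\mathrm{SAFE}}\le c_\alpha$ rigorous. The critical value is defined through a least favorable null distribution, and the relevant distributions are either asymptotic or involve atoms at zero. I would define $c^{\mathrm{SAFE}}_\alpha=\inf\{c:\sup_{\bt\in\mL}\mathbb{P}_{\bt}(T_n^{\mathrm{SAFE}}\ge c)\le\alpha\}$. Monotonicity then follows directly from the pointwise domination $T_n^{\mathrm{SAFE}}\le T_n$, which avoids any continuity issues. Separately, the invariance of the DT along $\mL$ (so that the supremum over $\mL$ is attained at $\bzero$) would be invoked, as in the paper's setup.
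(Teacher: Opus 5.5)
Your proposal is correct and follows the paper's proof in all three parts: the domination $T_n^{\mathrm{SAFE}}\le T_n$ gives the level bound; safety follows because the certificate test $T_n'$ rejects with probability tending to one off $\mC$; and the power comparison combines $\mathbb{P}_{\bt}(T_n'\ge c_\gamma')\to 0$ on $\mathrm{int}(\mC\setminus\mL)$ with $c_\alpha^{\mathrm{SAFE}}\le c_\alpha$. The differences are minor and in your favor: you take the safety step directly from the Type B consistency in Theorem~\ref{Thm-Geom} instead of re-deriving it through the fattened-cone and ball argument, and you cover every $\bt\notin\mC$ (the paper restricts to $\mathbb{R}^m\setminus(\mC\cup\mC^{\circ})$, although its argument only uses $\bt\notin\mC$); you also treat both critical values $c_\alpha$ and $c_\alpha^{\mathrm{SAFE}}$, and you need only $c_\alpha^{\mathrm{SAFE}}\le c_\alpha$ rather than the paper's strict inequality.
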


Theorem \ref{Thm-4} shows that the test $T_{n}^{\mathrm{SAFE}}$ with rejection region \eqref{Eq.Rn.SAFE} is a safe test. Moreover, for large samples the test based on $T_{n}^{\mathrm{SAFE}}$ is more powerful than the test based on $T_{n}$ in the interior of $\mathcal{C}$. However, in finite samples the safe test may incur some power loss near the boundary of $\mC$. This loss typically depends on $\gamma$ and the sample size $n$. It is also clear that for finite samples the possibility of Type III errors while using $T_{n}^{\mathrm{SAFE}}$ persists, but to a lesser degree and in a smaller subset of the parameter space compared to $T_{n}$.

Next, suppose that $\mC$ is a polyhedral cone of the form $\mC=\{\bt\in\mathbb{R}^{m}:\pmb{R\theta } \geq \bzero\} $ for some $p\times m$ restriction matrix $\boldsymbol{R}$. It follows that testing \eqref{Eq.Hypotheses.TypeA} using $\pmb{S}_n$ is equivalent to testing $H_{0}:\pmb{\eta} = \bzero$ versus $H_{1}:\pmb{\eta}\in\mathbb{R}_{+}^{p}\setminus \{\bzero\}$ using $\pmb{W}_n$ where $\sqrt{n}(\pmb{W}_n-\pmb{\eta}) \Rightarrow \mathcal{N}(\bzero,\pmb{\Psi})$, $\pmb{\eta}=\pmb{R\theta}$ and $\pmb{\Psi}=\pmb{R} \pmb{\Sigma}\pmb{R}^{T}$. The corresponding auxiliary hypotheses are $H_{0}':\pmb{\eta} \in \mathbb{R}_{+}^{p}$ versus $H_{1}':\pmb{\eta}\notin\mathbb{R}_{+}^{p}$. The DTs for these systems are 
\begin{eqnarray*}
T_{n}&=& \|\Pi_{\pmb{\Psi}}(\pmb{W}|\mathbb{R}_{+}^{p})\|_{\pmb{\Psi}}^{2}+o_{p}(1),
\\
T_{n}'&=& \|\Pi_{\pmb{\Psi}}(\pmb{W}|(\mathbb{R}_{+}^{p})_{\pmb{\Psi}}^{\circ})\|_{\pmb{\Psi}}^{2}+o_{p}(1),
\end{eqnarray*}
where $\pmb{W}$ is a $\mathcal{N}(\bzero,\pmb{\Psi})$ RV. Now, by Lemma 3.13.6 in Silvapulle and Sen (2005), see also Raubertas et al. (1986), we can deduce that for any $c_1,c_2\ge 0$ we have
\begin{align} \label{Eq.Joint}
\mathbb{P}_{\bzero}(T_n \ge c_1,T_n' < c_2) =\sum_{j=0}^{p} w_{j}(p,\pmb{\Psi},\mathbb{R}_{+}^{p})\mathbb{P}(\chi_{j} \ge c_1)  \mathbb{P}(\chi_{p-j} < c_2) + o_p(1),      
\end{align}
where $w_j=w_{j}(p,\pmb{\Psi},\mathbb{R}_{+}^{p})$, $j=0,\ldots,p$ are nonnegative weights that sum to unity. 
Furthermore, \eqref{Eq.Joint} implies that $T_n^{\rm SAFE} \Rightarrow T^{\rm SAFE}$ where for any $t^{\rm SAFE}\ge0$ and fixed $c_{\gamma}'$ the tail of $T^{\rm SAFE}$ is given by $\mathbb{P}_{\bzero}(T^{\rm SAFE} \ge t^{\rm SAFE}) = \mathbb{P}_{\bzero}(T \ge t^{\rm SAFE},T' < c_{\gamma}')$ where $T$ and $T'$ are respectively, the distributional limits of $T_n$ and $T_n'$. 

Evaluating p--values, finding critical values and related inferential tasks accurately approximating the unknown quantities in \eqref{Eq.Joint}. It is well known, see Lemma 3.13.7 in Silvapulle and Sen (2005), that $w_j$ where $j=0,\ldots,p$ is the probability that the RV $\pmb{W}$ is projected onto a face of dimension $j$ of the cone $\mathbb{R}_{+}^{p}$. It follows that by generating a large sample $\hat{\pmb{W}}_1,\ldots,\hat{\pmb{W}}_N$ from $\mathcal{N}(\bzero,\pmb{\Psi}_n)$, where $\pmb{\Psi}_n=\pmb{R\Sigma_nR}^T$, we can estimate $w_j$ by $\hat{w}_j=N^{-1}\sum_{k=1}^{N}\mathbb{I}_{\{\Pi(\hat{\pmb{W}}_k|\mathbb{R}_{+}^{p})\in \mathcal{F}_j\}}$, i.e., the proportion of times the projection is in $\mathcal{F}_j$, the collection of faces of dimension $j$. Clearly, $\hat{w}_j \xrightarrow{p} w_j$ for all $j$ as $n \to \infty$ and $N \to \infty$. Given the simulated weights $\hat{w}_0,\ldots,\hat{w}_p$ we can find the rejection region of $T_n^{\rm SAFE}$. First, fix $\gamma$ and set $c_1=0$ in \eqref{Eq.Joint}. Solve the resulting equation, i.e., $1-\gamma =\sum_{j=0}^{p} \hat{w}_{j}\mathbb{P}(\chi_{p-j}\le c_{\gamma}')$ and denote the solution by $\hat{c}_{\gamma}'$. Next, for any fixed $\alpha$ solve the equation $\alpha = \sum_{j=0}^{p} \widehat{w}_{j}\mathbb{P}(\chi_{j} > c_{\alpha}^{\rm SAFE}) \mathbb{P}(\chi_{p-j}\le \hat{c}_{\gamma}')$ and denote the solution by $\hat{c}_{\alpha}^{\rm SAFE}$. Note that both equations can be readily solved by the bisection method. Plugging $\hat{c}_{\alpha}^{\rm SAFE}$ and $\hat{c}_{\gamma}'$ into \eqref{Eq.Rn.SAFE} we obtain an approximate $\alpha$--level rejection region for $T_n^{\rm SAFE}$. We have just shown that given $\gamma$ it is always possible to adjust the level of $T_n$ so $T_{n}^{\mathrm{SAFE}}$ has any prechosen level in $(0,1)$. Using a similar procedure it is always possible to adjust $\gamma$ so $T_{n}^{\mathrm{SAFE}}$ has level $\alpha^{\mathrm{SAFE}}\in $ $(0,\alpha)$. 

\section{Numerical results}\label{Sec.Numerics}

In this section the performance of the safe test is compared with that of DT in several experimental settings. In addition we provide an analysis of two well known examples from the literature.

\subsection{Simulations}

We conducted a simulation study to evaluate the performance of the proposed safe test. Specifically, we test \eqref{Eq.TypeA.Canonical} by generating samples of size $n$ from $\mathcal{N}_{2}(\bt,\bI)$. Seven possible experimental settings for the value of $\bt$, listed in Table \ref{Table.theta.values}, were considered. These include a null value $\bt_{0} = (0,0)^{T}$ along with six non--null values, all located on a circle with radius $\|\bt_{i}\|^{2}=3/4$ for $i = 1,\ldots, 6$. For each $i$, we report $\measuredangle(\pmb{e}_{1},\bt_{i})$, the angle between $\bt_i$ and $\pmb{e}_{1} = (1,0)$ the positive horizontal axis.
Observe that $\bt_{i}\in\mC$ for $i\in\{1,2,3\}$ and $\bt_{i}\notin\mathcal{C}$ for $i\in\{4,5,6\}$. 

\FloatBarrier
\begin{table}[!ht]
\centering
\caption{Experimental settings for the mean value $\bt$ used in the simulation study.}
\label{Table.theta.values}
\begin{tabular}{l| l|l|l|l|l|l |l}
Mean & $\bt_{0}$ & $\bt_{1}$ & $\bt_{2}$ & $\bt_{3}$ & $\bt_{4}$ & $\bt_{5}$ & $\bt_{6}$ \\ 
\hline
\multicolumn{1}{l|}{Angle} & $-$ & $45^{\circ }$ & $15^{\circ }$ & $0$ & $
-15^{\circ }$ & $-45^{\circ }$ & \multicolumn{1}{|l}{$-60^{\circ }$} \\ 
\multicolumn{1}{l|}{Region} & $\mathcal{L}$ & $\mathcal{C}$ & $\mathcal{C}$
& $\mathcal{C}$ & $\mathcal{C}^c$ & $\mathcal{C}^c$ & \multicolumn{1}{|l}{$%
\mathcal{C}^c$} \\ 
\end{tabular}
\end{table}
\FloatBarrier

Table \ref{Table.Sim.Results} reports on the power of $T_{n}$ and $T_{n}^{\mathrm{SAFE}}$ in the aforementioned experimental settings assuming $\alpha =\alpha ^{\mathrm{SAFE}}=0.05$, $n\in \{ 10,20,50\}$ and $\gamma\in\{0.1,0.05,0.01\}$. The test $T_{n}$ is applied as usual while in order to use $T_{n}^{\mathrm{SAFE}}$ we first find $c_{\alpha }^{\mathrm{SAFE}}$ as explained earlier. The power is calculated using $10^{6}$ simulation runs and rounded to the third significant digit. The first block of results are obtained at the null $\bt_{0}=(0,0)^{T}$; these show that the actual level of $T_{n}^{\mathrm{SAFE}}$ agrees with its nominal level for all values of $\gamma$. At $\bt_{1}$, located in the interior of $\mathbb{R}_{+}^{2}$, there is no difference between the power of $T_{n}$ and $T_{n}^{\mathrm{SAFE}}$ whereas at $\bt_{2}$ the powers of $T_{n}$ and $T_{n}^{\mathrm{SAFE}}$ are very similar. At $\bt_{3}$, which lies on the boundary of $\mathbb{R}_{+}^{2}$, the power of $T_{n}$ is slightly higher than the power of $T_{n}^{\mathrm{SAFE}}$; this difference is insignificant when $\gamma$ is small. When $i\in\{4,5,6\}$ we have $\bt_{i}\notin\mathcal{C}$ and the DT commits a Type III error with high probability. In fact, as $n$ grows, so does the likelihood of a Type III error. On the other hand, the probability of a Type III error drops precipitously when $T_{n}^{\mathrm{SAFE}}$ is used especially for large $n$ and values of $\bt$ which are far from the alternative.

\FloatBarrier
\begin{table}[!ht] \centering
\caption{Comparing the power of the DT and Safe test}
\label{Table.Sim.Results}
\begin{tabular}{ll|ll|ll|ll}
&  & \multicolumn{2}{|c|}{$n=10$} & \multicolumn{2}{|c|}{$n=20$} & \multicolumn{2}{|c}{$n=50$} \\  
Mean & \multicolumn{1}{|l|}{$\gamma $} & $T_{n}$ & $T_{n}^{\mathrm{SAFE}}$ & 
$T_{n}$ & $T_{n}^{\mathrm{SAFE}}$ & $T_{n}$ & $T_{n}^{\mathrm{SAFE}}$ \\ 
\hline
& \multicolumn{1}{|l|}{$0.1$} & $0.050$ & $0.048$ & $0.050$ & $0.049$ & $0.050$ & $0.048$ \\ 
$\bt_{0}$ & \multicolumn{1}{|l|}{$0.05$} & $0.050$ & $0.049$ & $0.050$ & $0.049$ & $0.050$ & $0.049$ \\ 
& \multicolumn{1}{|l|}{$0.01$} & $0.050$ & $0.050$ & $0.050$ & $0.050$ & $0.050$ & $0.050$ \\ \cline{1-8}
& \multicolumn{1}{|l|}{$0.1$} & $0.705$ & $0.705$ & $0.931$ & $0.931$ & $1.000$ & $1.000$ \\ 
$\bt_{1}$ & \multicolumn{1}{|l|}{$0.05$} & $0.706$ & $0.706$ & $0.932$ & $0.932$ & $1.000$ & $1.000$ \\ 
& \multicolumn{1}{|l|}{$0.01$} & $0.706$ & $0.706$ & $0.932$ & $0.932$ & $1.000$ & $1.000$ \\ \cline{1-8}
& \multicolumn{1}{|l|}{$0.1$} & $0.693$ & $0.687$ & $0.928$ & $0.924$ & $1.000$ & $0.999$ \\ 
$\bt_{2}$ & \multicolumn{1}{|l|}{$0.05$} & $0.693$ & $0.691$ & $0.928$ & $0.927$ & $1.000$ & $0.999$ \\ 
& \multicolumn{1}{|l|}{$0.01$} & $0.693$ & $0.693$ & $0.928$ & $0.928$ & $1.000$ & $1.000$ \\ \cline{1-8}
& \multicolumn{1}{|l|}{$0.1$} & $0.666$ & $0.640$ & $0.917$ & $0.879$ & $1.000$ & $0.957$ \\ 
$\bt_{3}$ & \multicolumn{1}{|l|}{$0.05$} & $0.665$ & $0.652$ & $0.917$ & $0.899$ & $1.000$ & $0.980$ \\ 
& \multicolumn{1}{|l|}{$0.01$} & $0.666$ & $0.664$ & $0.917$ & $0.914$ & $1.000$ & $0.996$ \\ \cline{1-8}
& \multicolumn{1}{|l|}{$0.1$} & $0.608$ & $0.528$ & $0.885$ & $0.711$ & $0.999$ & $0.635$ \\ 
$\bt_{4}$ & \multicolumn{1}{|l|}{$0.05$} & $0.609$ & $0.565$ & $0.885$ & $0.782$ & $0.999$ & $0.752$ \\ 
& \multicolumn{1}{|l|}{$0.01$} & $0.609$ & $0.598$ & $0.885$ & $0.856$ & $0.999$ & $0.907$ \\ \cline{1-8}
& \multicolumn{1}{|l|}{$0.1$} & $0.353$ & $0.182$ & $0.624$ & $0.160$ & $0.955$ & $0.020$ \\ 
$\bt_{5}$ & \multicolumn{1}{|l|}{$0.05$} & $0.354$ & $0.230$ & $0.623$ & $0.234$ & $0.955$ & $0.043$ \\ 
& \multicolumn{1}{|l|}{$0.01$} & $0.354$ & $0.299$ & $0.624$ & $0.392$ & $0.955$ & $0.140$ \\ \cline{1-8}
& \multicolumn{1}{|l|}{$0.1$} & $0.193$ & $0.071$ & $0.352$ & $0.041$ & $0.724$ & $0.002$ \\ 
$\bt_{6}$ & \multicolumn{1}{|l|}{$0.05$} & $0.192$ & $0.096$ & $0.352$ & $0.070$ & $0.724$ & $0.004$ \\ 
& \multicolumn{1}{|l|}{$0.01$} & $0.192$ & $0.143$ & $0.352$ & $0.147$ & $0.724$ & $0.021$
\end{tabular}
\end{table}
\FloatBarrier

These results corroborate our theoretical findings, demonstrating that the safe test is effective in limiting the occurrence of Type III errors even with small sample sizes, while simultaneously maintaining power comparable to the standard DT, even in when the true value of the parameter is close to the boundary.

\subsection{Illustrative examples\label{Sec.IllustrativeEx}}

In this section two well--known examples are examined from the perspective of safe testing.

\subsubsection{Testing for positivity}

We start by demonstrating that the region in the parameter space on which Type III errors occur depends on the variance matrix $\bSig$. Recall that the well known Hotelling $T^{2}$ test (Bilodeau and Brenner, 1999) rejects the null $H_{0}:\bt=\bzero$ when $n\pmb{S}_{n}^{T}\bSig_{n}^{-1}\pmb{S}_{n}$ is large, i.e., whenever $\|\pmb{S}_{n}\|_{\bSig_{n}}^{2}$ is large. The latter norm also plays a role in the DT \eqref{Eq.DT}. We demonstrate the subtle and surprising effect of $\bSig$ (or $\bSig_n$) on the DT. Recall that the DT for testing \eqref{Eq.TypeA.Canonical} when $m=2$ is consistent provided 
\begin{equation*}
\bt\notin(\mathbb{R}_{+}^{2})_{\bSig}^{\circ}=\{\bt\in \mathbb{R}^{2}:\bt^{T}\bSig^{-1}\pmb{v}\leq 0,\forall \pmb{v}\in \mathbb{R}_{+}^{2}\text{ }\}=\{\bt\in\mathbb{R}^{2}:\bt^{T}\bSig^{-1}\pmb{v}\leq 0, \pmb{v}\in \{\pmb{e}_{1},\pmb{e}_{2}\}\}
\end{equation*}
where $\pmb{e}_{1}=(1,0)^{T},\pmb{e}_{2}=(0,1)^{T}$. Suppose that $\bSig$ has an interclass correlations structure of  the form
\begin{equation} \label{Eq.Var.Sill}
\left( 
\begin{array}{cc}
1 & \rho \\ 
\rho & 1
\end{array}
\right)  
\end{equation}
for some $|\rho|<1$. It is easily verified that $
(\mathbb{R}_{+}^{2})_{\bSig}^{\circ}=\{\bt\in\mathbb{R}^{2}:\theta_{1}-\rho\theta_{2}\leq 0,\ \theta_{2}-\rho\theta_{1}\leq 0\}$. Figure \ref{Fig.PolarCones} plots $(\mathbb{R}_{+}^{2})_{\bSig}^{\circ}$ for $\rho \in \{+1/4,-1/4\}$.

\FloatBarrier
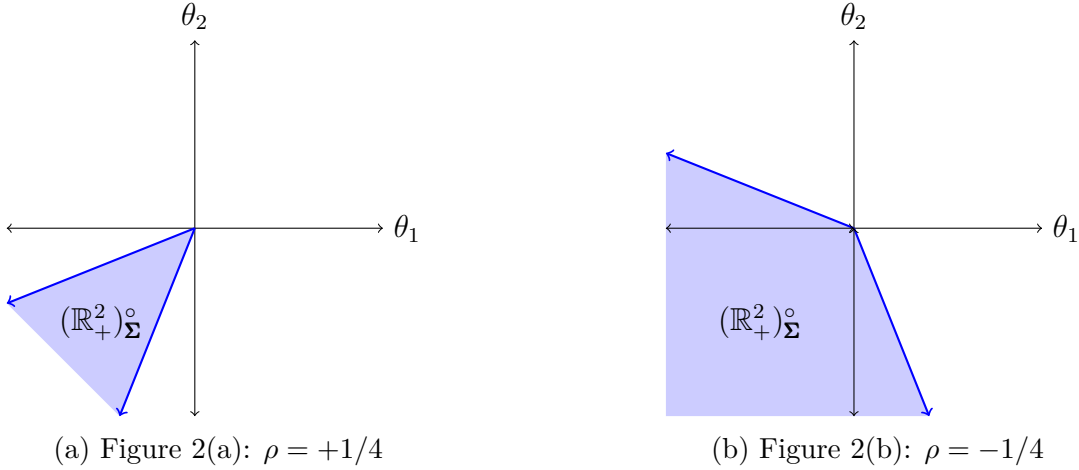
\begin{figure*}[htbp!]
\centering

\begin{subfigure}[b]{0.5\textwidth}
\centering
\begin{tikzpicture}
\draw[<->] (-2.5,0) -- (2.5,0) node[right] {$\theta_1$};
\draw[<->] (0,-2.5) -- (0,2.5) node[above] {$\theta_2$};
\fill[blue!20] (-2.5,-1) -- (0,0) -- (-1,-2.5) -- cycle;
\draw[<-,blue,thick, smooth] (-2.5,-1) -- (0,0);
\draw[<-,blue,thick, smooth] (-1,-2.5) -- (0,0);
\node at (-1.25, -1.25) {$(\mathbb{R}_{+}^{2})_{\bSig}^{\circ}$};
\end{tikzpicture}
\caption{Figure 2(a): $\rho = +1/4$}
\end{subfigure}%
\hfill
\begin{subfigure}[b]{0.5\textwidth}
\centering
\begin{tikzpicture}
\draw[<->] (-2.5,0) -- (2.5,0) node[right] {$\theta_1$};
\draw[<->] (0,-2.5) -- (0,2.5) node[above] {$\theta_2$};
\fill[blue!20] (-2.5,1) -- (0,0) -- (1,-2.5) -- (-2.5,-2.5) -- cycle;
\draw[<-,blue,thick, smooth] (-2.5,1) -- (0,0);
\draw[<-,blue,thick, smooth] (1,-2.5) -- (0,0);
\draw[<->] (-2.5,0) -- (0,0);
\draw[<->] (0,-2.5) -- (0,0);
\node at (-1.25, -1.25) {$(\mathbb{R}_{+}^{2})_{\bSig}^{\circ}$};
\end{tikzpicture}
\caption{Figure 2(b): $\rho = -1/4$}
\end{subfigure}

\caption{Plots of the cones $(\mathbb{R}_{+}^{2})_{\bSig}^{\circ}$, i.e., the regions in which the DT is not consistent, for $\rho\in\{+1/4,-1/4\}$. The cones, in blue, are bounded by their extreme rays which extend indefinitely.}  
\label{Fig.PolarCones}
\end{figure*}
\FloatBarrier

It is clear that if $\rho\geq 0$ then $(\mathbb{R}_{+}^{2})_{\bSig}^{\circ}\subseteq \mathbb{R}_{-}^{2}$ and  $(\mathbb{R}_{+}^{2})_{\bSig}^{\circ}\supseteq \mathbb{R}_{-}^{2}$ when $\rho\leq0$; if $\rho =0$ then $(\mathbb{R}_{+}^{2})_{\bSig}^{\circ}=\mathbb{R}_{-}^{2}$. Thus, the rejection region depends on the value of the correlation coefficient. Further observe that a Type III error may occur whenever $\bt\in(Q_2\cap Q_3\cap Q_4)\backslash (\mathbb{R}_{+}^{2})_{\bSig}^{\circ}$, where $Q_i$ is the $i^{th}$ quadrant of $\mathbb{R}^2$. It follows that the possibility of Type III errors increases when $\rho = +1/4$ compared to when $\rho = -1/4$, as the dependence structure alters the geometry of the rejection region. It is also clear from Figure \ref{Fig.PolarCones} that when $\rho =+1/4$, it is possible to reject the null even when both components of $\overline{\pmb{X}}_{n}$ are negative; this is not possible when $\rho =-1/4$. Thus, the correlation structure strongly influences the consistency of tests, and therefore also the possibility of Type III errors. Moreover, the power function, as a function of $\rho$, exhibits sharp phase transitions from regions where it is consistent to regions in which the test does not have any power; a phenomenon, not observed with unconstrained tests and which has not been fully recognized to date. It is also clear that the essence of this example holds in higher dimensions and more complexly structured variance matrices.

The preceding analysis is the key to understanding Silvapulle (1997) paper entitled \emph{A curious example involving the LRT for one sided hypotheses} which sparked a lively debate, cf. Perlman and Wu (1999). In that paper the hypotheses \eqref{Eq.TypeA.Canonical} was tested assuming $\pmb{X}_1, \ldots, \pmb{X}_5$ are IID $\mathcal{N}_2(\bt, \bSig)$, where $\overline{\pmb{X}}_{5} = (-3,-2)^{T}$ and $\bSig$ is of the form \eqref{Eq.Var.Sill} with $\rho = 0.9$. He observed that the LRT rejects the null hypothesis, whereas the individual one--sided tests, i.e., $H_{0}^{(i)}: \theta_{i} = 0$ versus $H_{1}^{(i)}: \theta_{i} > 0$ for $i \in {1,2}$, and therefore the Intersection--Union Test (Casella and Berger, 2024), do not. Silvapulle argued that these two diametrically opposed conclusions, reached using different testing procedures, may appear counterintuitive, but are not logically inconsistent. Nevertheless, he emphasized that this example serves as a caution against the indiscriminate application of one--sided tests.

We reanalyze this example from the perspective of safe testing. In fact, in this example $T_{n}=12.89$ and $c_{\alpha }$ solves $\alpha =\mathbb{P}(\chi _{1}^{2}\geq c_{\alpha })/2+(1-\cos^{-1}(\rho)/\pi )\mathbb{P}(\chi_{2}^{2}\geq c_{\alpha})$ which for $\alpha =0.05$ equals $4.915$. It follows that $T_{n}>c_{\alpha}$ and the associated p--value $\alpha^{\ast}$ is smaller than $0.001$. It is also easy to check that $T_{n}'>c_{\gamma }'$ for $\gamma \in \{0.1,0.05,0.01\}$. In fact, the p--value associated with the auxiliary hypotheses is highly significant, i.e., $\gamma^{\ast }<10^{-6}$. Thus, using the safe test, $T_{n}^{\mathrm{SAFE}}$, would not lead to the rejection of the null hypothesis but rather to a reevaluation of the original assumptions on the possible values of the parameter $\boldsymbol{\theta}$. Further note that since $\rho=9/10>0$ it is possible, as stated earlier, cf. Figure \ref{Fig.PolarCones}, to reject the null using $T_n^{\rm SAFE}$ even when both entries of $\overline{\pmb{X}}_{5}$ are negative. However, the distance of $\overline{\pmb{X}}_{5}$ from the null also plays an important role. In particular, $\overline{\pmb{X}}_{5}$ is sufficiently separated from the null, i.e., $(-3,-2)^T \notin \mathcal{A}_{5}(\bzero,\mathbb{R}_{+}^{2},\gamma)$ for all $\gamma \ge 10^{-5}$. We conclude that the inconsistency pointed out by Silvapulle's paper is easily understood and resolved by using the framework of safe testing. 

\subsubsection{Testing for stochastic order}

Cohen and Sackrowitz (1998) presented two tables comparing trinomial distributions with fixed marginals. Their data, appearing in Tables 5 and 6 in their paper, is displayed in  Table \ref{Table.CS.Data} below.

\FloatBarrier
\begin{table}[!ht] 
\centering
\caption{Tables 5 and 6 of Cohen and Sackrowitz (1998). Both sub--tables compare two trinomial distributions, one for the control group the other for the treatment group.}
\label{Table.CS.Data}
\begin{tabular}{llcccc}
&  & Worse & Same & Better & Total \\ \cline{3-6}
& Control & $5$ & $11$ & $1$ & $17$ \\ 
\underline{Table 5 of C \& S:} & Treatment & $3$ & $8$ & $4$ & $15$ \\ 
& Total & $8$ & $19$ & $5$ &  \\ 
\\ \hline
&  & Worse & Same & Better & Total \\ \cline{3-6}
& Control & $0$ & $16$ & $1$ & $17$ \\ 
\underline{Table 6 of C \& S:} & Treatment & $8$ & $3$ & $4$ & $15$ \\ 
& Total & $8$ & $19$ & $5$ & 
\end{tabular}
\medskip
\end{table}
\FloatBarrier

\noindent The objective was to test whether the outcomes distributions are ordered by treatment. Let $P=\left( p_{1},p_{2},p_{3}\right)$ and $Q=\left(q_{1},q_{2},q_{3}\right)$ denote the trinomial distribution of the control and treatment groups respectively. The stochastic order $P\preceq _{st}Q$, see Shaked and Shanthikumar (2007), holds provided $p_{1}\geq q_{1}$ and $p_{1}+p_{2}\geq q_{1}+q_{2}$. It follows that testing
\begin{equation} \label{Eq.Hypotheses.PQ}
H_{0}:P=_{st}Q ~~\text{ \ versus }~~ H_{1}:P\prec _{st}Q  
\end{equation}
is equivalent to testing $H_{0}:\pmb{\eta}=\bzero$ versus $H_{1}:\pmb{\eta}\in \mathbb{R}_{+}^{2}\backslash\{\bzero\}$  where $\pmb{\eta}=\pmb{R\theta }$ with
\begin{equation*}
\bt=\left(p_{1},p_{1}+p_{2},q_{1},q_{1}+q_{2}\right)^{T}\text{ and } \boldsymbol{R}=
\left( 
\begin{array}{cccc}
1 & 0 & -1 & 0 \\ 
0 & 1 & 0 & -1
\end{array}
\right). 
\end{equation*}
Let $\pmb{S}_{n}$ denote the MLE of $\bt$. By the central limit theorem $\sqrt{n}(\pmb{S}_{n}-\bt)\Rightarrow \mathcal{N}_{4}(\bzero,\bSig)$ where $\bSig=\mathrm{BlockDiag}(\rho _{1}\bSig_{1},\rho_{2}\bSig_{2})$ is a block diagonal matrix, $\rho _{i}=\lim (n/n_{i})$ for $i=1,2$. 
Under the null $P=Q$ so $\bSig_{1}=\bSig_{2}$ and therefore $\bSig$ can be estimated by $\bSig_{n}=\mathrm{
BlockDiag}(n/n_{1}\widehat{\bSig}_{0},n/n_{2}\widehat{\bSig}_{0})$ where 
\begin{equation*}
\widehat{\bSig}_{0}=
\left( 
\begin{array}{cc}
\frac{n_{11}+n_{21}}{n_{1}+n_{2}}(1-\frac{n_{11}+n_{21}}{n_{1}+n_{2}}) & 
\frac{n_{11}+n_{21}}{n_{1}+n_{2}}\frac{n_{13}+n_{23}}{n_{1}+n_{2}} \\ 
\frac{n_{11}+n_{21}}{n_{1}+n_{2}}\frac{n_{13}+n_{23}}{n_{1}+n_{2}} & \frac{n_{13}+n_{23}}{n_{1}+n_{2}}(1-\frac{n_{13}+n_{23}}{n_{1}+n_{2}})
\end{array}
\right).
\end{equation*}
Now $\pmb{\eta}=\pmb{R\theta }$ can be estimated by $\pmb{W}_{n}=\pmb{RS}_{n}$ and we find that 
\begin{equation*}
\pmb{W}_{n}^{[\mathrm{T5}]}=\left( 
\begin{array}{c}
0.20 \\ 
0.09
\end{array}
\right) ,\text{ }\pmb{W}_{n}^{[\mathrm{T6}]}=\left( 
\begin{array}{c}
-0.53 \\ 
0.21
\end{array}
\right) \text{ and }\boldsymbol{V}_{n}=\boldsymbol{R\Sigma }_{n}\boldsymbol{R}^{T}=\left( 
\begin{array}{cc}
0.75 & 0.16 \\ 
0.16 & 0.53
\end{array}
\right) .
\end{equation*}
Here $\pmb{W}_{n}^{[\mathrm{T5}]}$ and $\pmb{W}_{n}^{[\mathrm{T6}]}$ are the values of $\pmb{W}_{n}$ calculated from their Tables 5 and 6 respectively. Since the margins of both tables are equal the estimated variance under the null, i.e., $\boldsymbol{V}_{n}$, is the same in both settings. The p--values associated with their Table 5 are $(\alpha^{\ast},\gamma^{\ast}) =( 0.12,0.96)$. This finding indicates that the null hypothesis in \eqref{Eq.Hypotheses.PQ} can not be rejected in favor of the alternative in \eqref{Eq.Hypotheses.PQ}. A p--value of $0.12$ does indicate that there is some evidence, albeit weak, that $P\prec _{st}Q$. The p--value associated with the auxiliary hypothesis, i.e., with $H_{0}:P\preceq _{st}Q$ versus $H_{1}:P\nprec _{st}Q$ is $0.96$ indicating that the test is safe. Thus, it seems that their Table 5 suggests an ordering but is under--powered. In fact if we (artificially) double the number of observations in each cell in their Table 5 then we obtain the p--values $(\alpha^{\ast },\gamma ^{\ast}) =(0.03,0.96)$ indicating that the original null can be safely rejected. The p--values associated with their Table 6 are $(\alpha^{\ast},\gamma^{\ast})=(0.01,0.001)$. These p--values indicate that both the original null and the auxiliary null are rejected. In other word using $T_{n}$ would lead to the rejection of the original null, and very likely a Type III error, whereas $T_{n}^{\mathrm{SAFE}}$ would protect against such a potentially erroneous conclusion. It is worth mentioning that Cohen and Sackrowitz (1998) write that "some statisticians would be more inclined to assert stochastic order for Table 5 than for Table 6" a statement with which we fully agree and for which we now have a well--founded rigorous explanation.

\section{Summary and discussion\label{Sec.Summary}}

This paper develops a methodology for constructing safe tests with a focus on Type A Problems in ORI. We believe that by alleviating the problem of Type III errors, the proposed methodology addresses some of the principled objection to the use of ORI. We hope that this advance will allow statisticians and researchers working in a variety of application areas to capitalize on the benefits of ORI without the fear of systematic errors. 

The new testing procedure combines two base--tests; $T_n$ for the original system of hypotheses, i.e., \eqref{Eq.Hypotheses.TypeA}, and $T_n'$ for the auxiliary hypotheses \eqref{Eq.Hypotheses.TypeB}. The proposed approach can also be described as two--step procedure as indicated in Table \ref{Table.Decisions}. In Step One, the null hypothesis in \eqref{Eq.Hypotheses.TypeB} is tested at a predetermined level $\gamma$. If the null is rejected, the procedure terminates, and one concludes that neither the original null nor the alternative hypotheses are supported by the data and therefore implausible. In such cases, researchers are advised to reevaluate the available evidence, reconsider their modeling assumptions, and, collect additional data. If the auxilliary null is not rejected, a certificate of validity is issued, and it is appropriate to proceed to Step Two, in which the hypotheses in \eqref{Eq.Hypotheses.TypeA} are tested at the significance level $\alpha$ controlling the overall probability of a Type I error at the level $\alpha^{\mathrm{SAFE}}$.

The level of the composite safe test denoted by $\alpha^{\mathrm{SAFE}}$ is a function of the levels of the base tests $\alpha$ and $\gamma$. For any fixed value of $\gamma$ one can easily adjust $\alpha$ so $\alpha^{\mathrm{SAFE}}$ achieves any prescribed value in $(0,1)$. The value of $\gamma$ is immaterial in large samples provided $c_{\gamma}'/\sqrt{n}\rightarrow 0$ when $n\rightarrow \infty$. In finite samples, the value of $\gamma$ modulates the relationship between the power of the test and its Type III error rate, especially for values of $\bt$ near the boundary of the alternative. Since in Type A Problems $\mL\subset\mC$ it seems that choosing $\gamma<\alpha$ is coherent, i.e., from a purely logical perspective rejecting the auxiliary hypotheses should require stronger evidence than rejecting the original null. Nevertheless, we have also experimented with values $\gamma>\alpha$. These are sensible in situations where Type III errors may have a profound negative impact or the experimenter may have low confidence in the original formulation of the testing problem. We believe that an optimal choice of $\gamma$ can be made by formally balancing the power with the possibility of Type III errors. This is an open problem. 

The developments in this paper can be generalized in many directions. For example, safe tests for infinite dimensional hypotheses of the form $H_{0}:F(x)=G(x)$ versus $H_{1}:F(x)\gvertneqq G(x)$ 
where the symbol $\gvertneqq~$ indicates that a weak inequality holds for all $x\in \mathbb{R}$ and a strict inequality holds for some $x\in \mathbb{R}$ can be developed. These hypotheses can be tested by employing the ordinal dominance curve, (Davidov and Herman, 2012), or other reasonable alternatives. The auxiliary hypotheses are $H_{0}':F(x)\geq G(x)$ versus $H_{1}':\exists x\in \mathbb{R}$ such that $G(x)>F(x)$ for which an ordinal curved based test can also be developed. It is clear that a safe test is possible; its structure and properties require further study. 

The framework developed here can be extended to construct safe tests for general hypotheses of the form 
$H_{0}:\bt\in \Theta_{0}$ versus $H_{1}:\bt\in \Theta_{1}$ where $\Theta_{0}$ and $\Theta_{1}$ arbitrary subsets of $\Theta$. An interesting direction involves applying safe tests to Neyman--Pearson two--point hypotheses. Initial results suggest that the resulting procedures exhibit several appealing properties.

In summary, while order restrictions can substantially improve statistical efficiency and interpretability, these gains hinge on the validity of the assumed order. If the constraints are incorrect, inference may be seriously distorted. Consequently, there is a clear need for procedures that adaptively determine whether the data support an ordering and impose constraints only when warranted. This manuscript constitutes an initial contribution toward the development of such an adaptive methodology in ORI. Here hypothesis testing is addressed; extensions to constrained estimation, classification, and prediction forming the basis of future work.

\bigskip

\begin{center}
\underline{\textsc{Acknowledgments}}
\smallskip
\end{center}

The research of Ori Davidov was supported in part by the Israeli Science Foundation Grant No. 2200/22.

\begin{center}
\bigskip \medskip
\end{center}

\section{Appendix A: Proofs}

\textbf{Proof of Theorem \ref{Thm-Geom}:}

\smallskip

\begin{proof}
In Type A Problems $\Delta=\|\bt-\Pi_{\bSig}(\bt|\mL)\|_{\bSig}^{2}-\|\bt-\Pi_{\bSig}(\bt|\mC)\|_{\bSig}^{2}$ which by part $(h)$ of Proposition 3.12.6 in Silvapulle and Sen (2005) reduces to 
\begin{equation*}
\Delta =\|\Pi_{\bSig}(\bt|\mL)-\Pi_{\bSig}(\bt|\mC)\|_{\bSig}^{2}.
\end{equation*}
In addition by part $(i)$ of Proposition 3.12.6 of Silvapulle and Sen (2005) we may reexpress $\Delta$, given in the display above, as $\Delta =\|\Pi_{\bSig}(\bt|\mL^{\bot}\cap\mC)\|_{\bSig}^{2}$. It follows from Moreau's Theorem (Proposition 3.12.4 in Silvapulle and Sen, 2005) that $\Pi_{\bSig}(\bt|\mL^{\bot}\cap\mC)=\bzero$ if and only if $\bt\in(\mL^{\bot}\cap\mC)^{\circ}$. Thus we conclude that
in Type A Problems the DT is consistent provided
\begin{equation*}
\bt\notin (\mL^{\bot}\cap\mC)^{\circ}
\end{equation*}
establishing \eqref{Eq.Con.DT.TypeA}. Next, consider Type B Problems where $\Theta_{0}=\mathcal{C}$ and $\Theta_{1}=\mathcal{\mathbb{R}}^{m}$ in which case $\Delta$ reduces to $\|\bt-\Pi_{\bSig}(\bt|\mC)\|_{\bSig}^{2}$ which is strictly non--negative if and only if $\bt\notin\mC$. This completes the proof.  
\end{proof}

\medskip

\textbf{Proof of Theorem \ref{Thm-Acceptance.Regions}:}

\smallskip

\begin{proof}
Consider Type A Problems. By Theorem 3.7.1 in Silvapulle and Sen (2005) testing $H_{0}:\bt\in \mL$ versus $H_{1}:\bt\in \mC\backslash\mL$ is equivalent to
testing $H_{0}:\bt=\bzero$ versus $H_{1}:
\bt\in\mathcal{C\cap L}^{\bot}$. The DT for the latter system of hypotheses simplifies to
\begin{equation*}
T_{n}=n\{\|\pmb{S}_{n}\|_{\bSig_{n}}^{2}-\|\pmb{S}_{n}-\Pi_{\bSig_{n}}(\pmb{S}_{n}|(\mC\cap\mL^{\bot })_{\bSig_{n}}^{\circ})\|_{\bSig_{n}}^{2}\}
=n\|\Pi_{\bSig_{n}}(\pmb{S}_{n}|(\mC\cap\mL^{\bot})_{\bSig_{n}}^{\circ })\|_{\bSig_{n}}^{2}
\end{equation*}
where the second equality in the display above follows from Moreua's Theorem. Thus, the null is not rejected if $T_{n}<c_{\alpha }$, i.e., when
\begin{equation*}
\|\pmb{S}_{n}-(\mC\cap \mL^{\bot})_{\bSig_{n}}^{\circ}\|_{\bSig_{n}}^{2}<\frac{c_{\alpha }}{n}.
\end{equation*}
It is easy to verify that the latter is satisfied if and only if $
\pmb{S}_{n}\in\mathcal{A}_{n}(\mL,\mC,\alpha) $ where 
\begin{equation} \label{Eq.Pf.Thm.Acc.2}
\mathcal{A}_{n}(\mL,\mC,\alpha) =(\mC\cap\mL^{\bot})_{\bSig_{n}}^{\circ}\oplus\mathrm{Ball}_{\bSig_{n}}(\bzero,\sqrt{\frac{c_{\alpha }}{n}})
\end{equation}
establishing \eqref{Eq.1.Thm.Acc} as required. Next we consider a Type B Problem in which the hypotheses $H_{0}:\bt\in\mC$ versus $H_{1}:\bt\in \mathbb{R}^{m}\backslash\mC$ are tested. Clearly, the DT reduces to
\begin{equation*}
T_{n}=n\{\|\pmb{S}_{n}-\Pi_{\bSig_{n}}(\pmb{S}_{n}|\mC)\|_{\bSig_{n}}^{2}=n\|\Pi_{\bSig_{n}}(\pmb{S}_{n}|\mC)\|_{\bSig_{n}}^{2}.
\end{equation*}
Therefore $T_{n}<c_{\alpha}$ if and only if 
\begin{equation*}
\|\pmb{S}_{n}-\mC\|_{\bSig_{n}}^{2}<\frac{c_{\alpha }}{n}.
\end{equation*}
Repeating the calculations above, we find that 
\begin{equation*}
\mathcal{A}_{n}(\mC,\mathbb{R}^{m})=\mathcal{C}\oplus \mathrm{Ball}_{\bSig_{n}}(\bzero,\sqrt{\frac{c_{\alpha}}{n}})
\end{equation*}
concluding the proof.
\end{proof}

\medskip

\textbf{Proof of Theorem \ref{Thm-3}:}

\smallskip

\begin{proof}
It is well known (e.g., Barlow et al. 1972) that the DT for testing $H_{0}:\bt\in\mL$ versus $H_{1}:\bt\in \mC\backslash \mL$, where $\mL=\{\bt:\theta_{1}=\cdots =\theta_{K}\}$ and $\mC$ is an any closed convex cone is of the form
\begin{equation*}
T_{n}=\frac{n}{\widehat{\sigma}_{n}^{2}}\sum_{i=1}^{K}w_{i,n}(\widetilde{\theta}_{i,n}-\overline{Y}_{n})^{2},
\end{equation*}
where $\overline{Y}_{n}=n^{-1}\sum_{i=1}^{K}\sum_{j=}^{n_{i}}Y_{ij}$ is the overall mean, $(w_{1,n},\ldots ,w_{K,n})=(n_{1}/n,\ldots ,n_{K}/n)$ is a vector of weights determined by the sample sizes, and $\widetilde{\theta}_{1,n},\ldots,\widetilde{\theta}_{K,n}$ are the elements of $\widetilde{\bt_{n}}=\Pi_{\bSig_{n}}(\pmb{S}_{n}|\mC)$ the constrained estimator of $\bt$. Here $\pmb{S}_{n}=(\overline{Y}_{1,n},\ldots ,\overline{Y}_{K,n})^{T}$ with $\overline{Y}_{i,n}=n_{i}^{-1}\sum_{j=}^{n_{i}}Y_{ij}$ for $i=1,\ldots ,K$ and $\bSig_{n}=\widehat{\sigma}_{n}^{2}\mathrm{Diag}(1/w_{1,n},\ldots,1/w_{K,n})$ where $\widehat{\sigma}_{n}^{2}$ is any consistent estimator for $\sigma^{2}$. The resulting estimators are known as the isotonic regression estimators. In particular if $\mathcal{C=C}_{s}$ the $i^{th}$ element of $\widetilde{\bt}_{n}$ is given by the min--max formula 
\begin{equation} \label{Eq.maxmin.Thm3}
\widetilde{\theta }_{i,n}=\min_{t\geq i}\max_{s\leq i}\mathrm{Av}(\pmb{S}_{n},s,t).  
\end{equation}
Since $\sqrt{n}(\pmb{S}_{n}-\bt)\Rightarrow \mathcal{N}_{K}(\bzero,\bSig)$ and $\bSig_{n}\overset{p}{\rightarrow}\bSig=\sigma^{2}\mathrm{Diag}(1/w_{1},\ldots,1/w_{K})$  as $n\rightarrow\infty$ it is evident that for large $n$ we have
\begin{equation*}
\frac{1}{\widehat{\sigma}_{n}^{2}}\sum_{i=1}^{K}w_{i,n}(\widetilde{\theta}_{i,n}-\overline{Y}_{n})^{2}=\frac{1}{\sigma^{2}}\{\sum_{i=1}^{K}w_{i}(\theta_{i}^{\ast}-\overline{\theta})^{2}\}+o_{p}(1)
\end{equation*}
where $\overline{\theta}=\sum_{i=1}^{K}w_{i}\theta _{i}$ is the almost sure limit of $\overline{Y}_{n}$ and $\theta _{i}^{\ast },$ $i=1,\ldots,K$ are the limiting value of $\widetilde{\theta}_{n,i}$ obtained by applying the law of large numbers and the continuous mapping theorem to \eqref{Eq.maxmin.Thm3}, i.e., 
\begin{equation*}
\theta _{i}^{\ast}=\min_{t\geq i}\max_{s\leq i}\mathrm{Av}\left( 
\boldsymbol{\theta },s,t\right).
\end{equation*}
It follows that the DT is consistent if and only if
\begin{equation} \label{Eq.1.Thm3}
\sum_{i=1}^{K}w_{i}(\theta_{i}^{\ast }-\overline{\theta })^{2}>0.
\end{equation}
It is well known that $\overline{Y}_{n}=\sum_{i=1}^{K}w_{i,n}\widetilde{\theta}_{i,n}$ so $\overline{\theta}=\sum_{i=1}^{K}w_{i}\theta_{i}=\sum_{i=1}^{K}w_{i}\theta_{i}^{\ast}$. Now by construction $\theta_{1}^{\ast}\leq \cdots \leq \theta_{K}^{\ast}$ and $\overline{\theta }=\sum_{i=1}^{K}w_{i}\theta_{i}^{\ast}$ and therefore \eqref{Eq.1.Thm3} holds if and only if for some $i$ we have $\theta _{i}^{\ast}<\theta_{i+1}^{\ast}$.

Recall that the min--max formulas are equivalent to the celebrated pool--adjacent violator algorithm (PAVA) as described in van Eeden (2006). By the loop--invariant property of PAVA we can first apply PAVA separately to the vectors $\pmb{S}_{n}(1,i)=(\overline{Y}_{1,n},\ldots ,\overline{Y}_{i,n})^{T}$ and $\pmb{S}_{n}(i+1,K)=(\overline{Y}_{i+1,n},\ldots ,\overline{Y}_{K,n})^{T}$ obtain the estimators $\widetilde{\bt}_{n}(1,i)$ and $\widetilde{\bt}_{n}(i+1,K)$ and then apply PAVA again to the full vector $(\widetilde{\bt}_{n}(1,i)^{T},\widetilde{\bt}_{n}(i+1,K)^{T})$ to obtain $\widetilde{\bt}_{n}$. Furthernote that by the min--max formulas applied separately to the vectors $\pmb{
S}_{n}(1,i)$ and $\pmb{S}_{n}(i+1,K)$ we find that 
\begin{eqnarray*}
\widetilde{\theta }_{i,n}(1,i) &=&\max_{1\leq s\leq i}\mathrm{Av}(
\pmb{S}_{n}(1,i),s,i), \\
\widetilde{\theta }_{i+1,n}(i,K+1) &=&\min_{i+1\leq t\leq K}\mathrm{Av}(\pmb{S}_{n}\left( i+1,K\right) ,i+1,t).
\end{eqnarray*}
Here $\widetilde{\theta}_{i,n}(1,i)$ is the $i^{th}$ and largest element of $\widetilde{\bt}_{n}(1,i)$ whereas $\widetilde{\theta}_{i+1,n}(i,K+1)$ is the $1^{st}$ and smallest element of $\widetilde{\bt}_{n}(i+1,K)$. By the law of large numbers and the continuous mapping theorem $\widetilde{\theta}_{i,n}(1,i)\overset{p}{\rightarrow}\theta _{i}^{\ast}(1,i)$ and $\widetilde{\theta}
_{i+1,n}(i,K+1)\overset{p}{\rightarrow}\theta _{i+1}^{\ast}(i+1,K)$ where 
\begin{eqnarray*}
\theta _{i}^{\ast}(1,i) &=&\max_{1\leq s\leq i}\mathrm{Av}(\bt( 1,i),s,i), \\
\theta _{i+1}^{\ast}(i+1,K) &=&\min_{i+1\leq t\leq K}\mathrm{Av}(
\bt(i+1,K,i+1,t).
\end{eqnarray*}
Now if
\begin{equation} \label{Eq.2.Thm3}
\theta _{i}^{\ast}(1,i)<\theta _{i+1}^{\ast}(i+1,K)  
\end{equation}
then $\bt^{\ast}=(\bt^{\ast}(1,i)^{T},\bt^{\ast}(i+1,K)^{T})^{T}$, i.e., applying PAVA (or min--max) to $\pmb{S}_{n}$ is equivalent to separately applying it to $\pmb{S}_{n}(1,i)$ and $\pmb{S}_{n}(i+1,K)$ and then combining the results. Thus, if \eqref{Eq.2.Thm3} holds for some $i\in\{1,\ldots,K\}$ then $\theta _{i}^{\ast}<\theta_{i+1}^{\ast}$ and consequently \eqref{Eq.1.Thm3} holds as well so the DT is consistent. However, Equation \eqref{Eq.2.Thm3} is nothing but Equation \eqref{Eq.Thm3}. This completes the proof.
\end{proof}

\medskip

\textbf{Proof of Theorem \ref{Thm-DT.not.SAFE}:}

\smallskip

\begin{proof}
Consider Type A Problems. Clearly for any $\bt\in \mC$ we have $\Delta >0$, so the DT is consistent on $\mC$. By Equation \eqref{Eq.Con.DT.TypeA} the DT is not consistent on $(\mC\cap\mL^{\bot})^{\circ}$. First note that
\begin{equation*}
(\mC\cup (\mC\cap \mL^{\bot})^{\circ })\subseteq 
\mC\cup \mC^{\circ}.
\end{equation*}
Next note that
\begin{equation*}
\mathbb{R}^{m}\backslash (\mC\cup \mC^{\circ})=\{\pmb{u}\in \mathbb{R}^{m}:\pmb{u}^{T}\bSig^{-1}\pmb{v}>0,\forall 
\pmb{v}\in \mC\}\neq \varnothing,
\end{equation*}
i.e., $\mC\cup\mC^{\circ}$ is a strict subset of $\mathbb{R}^{m}$. It follows that DT is consistent on $\bt\in \mathbb{R}^{m}\backslash (\mC\cup (\mC\cap\mL^{\bot})^{\circ})$. In other words there are $\bt\notin \mC$ for which $\lim_{n}\mathbb{P}_{\bt}(T_{n}\in \mathcal{R})=1$. Thus the DT is not safe. 
\end{proof}

\medskip

\textbf{Proof of Theorem \ref{Thm-4}:}

\smallskip

\begin{proof} Fix $\alpha$ and let $c_{\alpha}=\sup_{\bt\in\mL}\mathbb{P}_{\bt}(T_{n}\geq c_{\alpha})$. Now by construction $T_{n}^{\mathrm{SAFE}} \le T_{n}$ and therefore $\{T_{n}^{\mathrm{SAFE}}\geq c_{\alpha}\} \subseteq \{T_{n}\geq c_{\alpha}\}$. It immediately follows that 
\begin{align*}
\alpha^{\mathrm{SAFE}} =\sup_{\bt\in\mL}\mathbb{P}_{\bt}(T_{n}^{\mathrm{SAFE}}\geq c_{\alpha})\le \sup_{\bt\in\mL}\mathbb{P}_{\bt}(T_{n}\geq c_{\alpha})=\alpha, 
\end{align*}
establishing \eqref{Eq.1.Thm.4}.  

Next let $\mathcal{E}$ denote the set of values of $\pmb{\theta}$ on which the test $T_{n}$ commits a Type III error with probability one as $n\rightarrow \infty$, i.e., 
\begin{equation*}
\mathcal{E}=\mathbb{R}^{p} \setminus (\mathcal{C} \cup \mathcal{C}^{\circ }),
\end{equation*}
cf., Theorem \ref{Thm-DT.not.SAFE}. Now for any $\pmb{\theta}\in \mathcal{E}$ and $\varepsilon>0$ we have
\begin{eqnarray*}
\mathbb{P}_{\pmb{\theta}}(T_{n}^{\mathrm{SAFE}} 
&\geq& c_{\alpha}) = \mathbb{P}_{\pmb{\theta}}(T_{n} \geq c_{\alpha}, T_{n}^{'} < c_{\gamma}') \\
&\leq& \mathbb{P}_{\pmb{\theta}}(T_{n}^{'} \le c_{\gamma}') = \mathbb{P}_{\pmb{\theta}}(\pmb{S}_{n} \in 
\mathcal{A}_{n}(\mathcal{C},\mathcal{\mathbb{R}}^{p},\gamma)) \\
&=& \mathbb{P}_{\pmb{\theta}}(\pmb{S}_{n} \in \mathcal{C} \oplus \mathrm{Ball}_{\pmb{\Sigma}_n}(\mathbf{0}, \sqrt{\frac{c_{\gamma}'}{n}})) \\
&=& \mathbb{P}_{\pmb{\theta}}(\pmb{S}_{n} \in \mathcal{C} \oplus \mathrm{Ball}_{\pmb{\Sigma}}(\mathbf{0}, \sqrt{\frac{c_{\gamma}'}{n}})) + o_p(1) \\
&=& \mathbb{P}_{\pmb{\theta}}(\pmb{S}_{n} \in 
\mathcal{C} \oplus \mathrm{Ball}_{\pmb{\Sigma}}(\mathbf{0}, \sqrt{\frac{c_{\gamma}'}{n}}), \pmb{S}_{n} \in \mathrm{Ball}_{\pmb{\Sigma}}(\pmb{\theta}, \varepsilon)) \\
&+& \mathbb{P}_{\pmb{\theta}}(\pmb{S}_{n} \in 
\mathcal{C} \oplus \mathrm{Ball}_{\pmb{\Sigma}}(\mathbf{0}, \sqrt{\frac{c_{\gamma}'}{n}}), \pmb{S}_{n} \notin \mathrm{Ball}_{\pmb{\Sigma}}(\pmb{\theta}, \varepsilon)) + o_p(1) \\
&\leq& \mathbb{P}_{\pmb{\theta}}(\pmb{S}_{n} \in (\mathcal{C} \oplus \mathrm{Ball}_{\pmb{\Sigma}}(\mathbf{0}, \sqrt{\frac{c_{\gamma}'}{n}}) \cap \mathrm{Ball}_{\pmb{\Sigma}}(\pmb{\theta}, \varepsilon))) \\
&+& \mathbb{P}_{\pmb{\theta}}(\pmb{S}_{n} \notin \mathrm{Ball}_{\pmb{\Sigma}}(\pmb{\theta}, \varepsilon)).
\end{eqnarray*}
Since $\sqrt{n}(\pmb{S}_{n}-\pmb{\theta})\Rightarrow \mathcal{N}_{p}(\bzero,\pmb{\Sigma})$ we have 
\begin{equation*}
\mathbb{P}_{\pmb{\theta}}(\pmb{S}_{n}\notin \mathrm{Ball}_{
\pmb{\Sigma}}(\pmb{\theta},\varepsilon))\rightarrow 0
\end{equation*}
as $n\rightarrow \infty$. Now $\pmb{\theta}\in\mathcal{E}$ so for
all small enough $\varepsilon$ we have $\mathrm{Ball}_{\pmb{\Sigma}}(\pmb{\theta},\varepsilon )\cap \mathcal{C}_{+}^{p}=\varnothing$. Consequently, for large enough $n$ we have $\mathcal{C}\oplus\mathrm{Ball}_{\pmb{\Sigma}}(\bzero,\sqrt{c_{\gamma }'/n})\cap \mathrm{Ball}_{\pmb{\Sigma}}(\pmb{\theta},\varepsilon )=\varnothing$. Hence, 
\begin{equation*}
\mathbb{P}_{\pmb{\theta}}(\pmb{S}_{n}\in (\mathcal{C}\oplus \mathrm{Ball}_{\pmb{\Sigma}}(\bzero,\sqrt{\frac{c_{\gamma}'}{n}})\cap\mathrm{Ball}_{\pmb{\Sigma}}(\pmb{\theta},\varepsilon ))=0
\end{equation*}
for all large $n$. Combining the displays above we conclude that $\mathbb{P}_{\pmb{\theta}}(T_{n}^{\mathrm{SAFE}}\geq c_{\alpha })\rightarrow 0$ and therefore $T_{n}^{\mathrm{SAFE}}$ is a safe test as claimed.

Finally, since for any $\alpha \in (0,1)$ we have $\alpha ^{\mathrm{SAFE}}<\alpha$, i.e., $\mathbb{P}(T_{n}^{\mathrm{SAFE}}\ge c_{\alpha})<\alpha$. Additionally, there must exist a value $c_{\alpha }^{\mathrm{SAFE}}$ for which and $\sup_{\bt\in\mL}\mathbb{P}(T_{n}^{\mathrm{SAFE}}\ge c_{\alpha }^{\mathrm{SAFE}})=\alpha$. Thus, $c_{\alpha }^{\mathrm{SAFE}}<c_{\alpha }$. Let $\bt\in \mathrm{int}(\mC\backslash\mL)$. For any such $\pmb{\theta}$ we have $\mathbb{P}_{\pmb{\theta}}(T_{n}^{'}<c_{\gamma}')\rightarrow 1$ as $n\rightarrow \infty $ and since $T_{n}^{\mathrm{SAFE}}=T_{n}\mathbb{I}_{\{T_{n}^{'}<c_{\gamma}'\}}$ we have 
\begin{equation*}
T_{n}^{\mathrm{SAFE}}=T_{n}+o_{p}(1).
\end{equation*}
Moreover, for such $\bt$ we have
\begin{equation*}
\mathbb{P}_{\bt}(T_{n}^{\mathrm{SAFE}}\geq c_{\alpha })=\mathbb{P}_{\pmb{\theta}}(T_{n}\geq c_{\alpha })+o_{p}(1) <\mathbb{P}_{\pmb{\theta}}(T_{n}\geq c_{\alpha}^{\mathrm{SAFE}})
\end{equation*}
establishing \eqref{Eq.2.Thm.4}.
\end{proof}

\end{document}